\newfont{\eaddfnt}{phvr8t at 12pt}
\def\email#1{{{\eaddfnt{\par #1}}}}       
\newfont{\affaddr}{phvr8t at 10pt}
\newdimen\auwidth
\newdimen\auskip
\newdimen\auskip
\newdimen\allauboxes
\newtoks\addauthors
\newtoks\subtitletext
\gdef\numberofauthors#1{\global\aucount=#1
\ifnum\aucount>3\global\originalaucount=\aucount \global\aucount=3\fi 
\global\auskipcount=\aucount\global\advance\auskipcount by 1
\global\multiply\auskipcount by 2
\global\multiply\auskip by \auskipcount
\global\advance\auwidth by -\auskip
\global\divide\auwidth by \aucount}
\def\alignauthor{
\end{tabular}%
  \begin{tabular}[t]{p{\auwidth}}\centering}%
\def\keywords{
\section*{Keywords}
}
\newtheoremstyle{mytheorem}
{2pt}
{2pt}
{\itshape}
{}
{\bfseries}
{.}
{.5em}
{}
\theoremstyle{mytheorem}
\newtheorem{theorem}{Theorem}[section]
\newtheorem{lemma}{Lemma}[section]
\newtheorem{definition}{Definition}[section]
\newtheoremstyle{mydefinition}
{2pt}
{2pt}
{}
{}
{\bfseries}
{.}
{.5em}
{}
\theoremstyle{mydefinition}
\newtheorem{open problem}{Open Problem}
\newtheorem{remark}{Remark}[section]
\title{Distributed Multi-Depot Routing without Communications}
\author{
%
%
\alignauthor
Dawsen Hwang\\
       \affaddr{Massachusetts Institute of Technology}\\
       \affaddr{77 Massachusetts Avenue}\\
       \affaddr{Cambridge, Massachusetts}\\
       \email{dawsen@mit.edu}
\alignauthor
Patrick Jaillet\\
       \affaddr{Massachusetts Institute of Technology}\\
       \affaddr{77 Massachusetts Avenue}\\
       \affaddr{Cambridge, Massachusetts}\\
       \email{jaillet@mit.edu}
\alignauthor Zhengyuan Zhou \\
       \affaddr{Stanford University}\\
       \affaddr{	450 Serra Mall}\\
       \affaddr{Stanford, California}\\
       \email{zyzhou@stanford.edu}
}
\begin{document}
\setcounter{page}{0}
\maketitle
\begin{abstract}
We consider and formulate a class of distributed
multi-depot routing problems, where servers are to visit a set of requests, with the aim of minimizing the total distance travelled by all servers. These problems fall into two categories: distributed offline routing problems where all the requests that need to be visited are known from the start; distributed online routing problems where the requests come to be known incrementally. A critical and novel feature of our formulations is that communications are not allowed among the servers, hence posing an interesting and challenging question: what performance can be achieved in comparison to the best possible solution obtained from an omniscience planner with perfect communication capabilities?
The worst-case (over all possible request-set instances) performance metrics are given by the approximation ratio (offline case) and the competitive ratio (online case). 

Our first result indicates that the online and offline problems are effectively equivalent: for the same request-set instance, the approximation ratio and the competitive ratio differ by at most an additive factor of $2$, irrespective of the release dates in the online case.
Therefore, we can restrict our attention to the offline problem.
For the offline problem, we show that the approximation ratio given by the Voronoi partition is $m$ (the number of servers). For two classes of depot configurations, when the depots form a line and when the ratios between the distances of pairs of depots are upper bounded by a sublinear function $f(m)$ (i.e.,\ $f(m) =  \mathopen{o}(m)$), we give partition schemes with sublinear approximation ratios $O(\log m)$ and $\Theta(f(m))$ respectively.  We also discuss several interesting open problems in our formulations: in particular, how our initial results (on the two deliberately chosen classes of depots) shape our conjecture on the open problems.  
\end{abstract}
\keywords{Traveling Salesman, Distributed Algorithms, Multi-Depot Routing, Robotics, Computational Geometry, Worst-case Analysis, Online Optimization}

\newpage
\section{Introduction}
With the advance of technology, it is now possible to deploy a fleet of servers (e.g., UAVs, robots) to visit requests located in a surrounding territory. 
The problem can be modeled as the classical (uncapacitated) multi-depot vehicle routing problem, in which a team of servers are to collectively visit a set of requests located in an ambient metric space so as to minimize the total travelled distance.
Clearly, an optimal solution to this problem will be achieved by centralized planners that know all the requests, or equivalently, by allowing full communications among the servers.
However, such a centralized approach suffers from two drawbacks:
Practically, deploying a full-communication scheme among all servers is often overly costly and unreasonble, if not, infeasible.
Theoretically, even with all the information, 
computing the optimal assignment (i.e., which server visits which subset of requests) is intractable. 

As such, typical existing solutions to this problem are heuristic algorithms that involve some local communications among subsets of servers for relaying requests to each other~\cite{RefWorks:358,RefWorks:517,RefWorks:518}.
While empirical demonstrations indicate the effectiveness of the heuristic algorithms under certain request-set configurations, no worst-case performance guarantees have been presented. 

We take an ambitious step back and ask the following question: what if we simply disallow any communications among the servers?
We can achieve this by a static partition of the underlying metric space, determined once for all based solely on the depot locations: thereafter, independent of the request-set instance, each server is only responsible for requests in its prescribed region.
In this paper, we formulate a class of distributed routing problems and study such static partitions.

\subsection{Our Contributions}
Our contributions are threefold. 

First, in Section~\ref{sec:formulation}, we formulate two types of distributed multi-depot vehicle routing problems, offline and online, with the critical and novel feature that no communications are allowed among the servers, achieved by a static partition as mentioned above and explained in detail in Definition~\ref{definition:par}.
The class of problems considered here are of both theoretical and practical value. 
Theoretically, it is interesting to understand the role played by communications among servers (or the absence thereof) by quantifying how well a static partition can perform compared to the optimal solution that is dynamic and request set dependent (and hence requires full communication capabilities among the servers).
Practically, communications among the servers can be costly (overhead, time lag, inefficiency etc.), and hence a close-to-optimal solution without communications can be a highly attractive alternative in practical deployment. (Our results in this paper suggest, as an initial step, that searching for such alternatives can be worthwhile, discussed more later.)  
As shown in Theorem~\ref{thm:relations-distributed-algorithms}, we can restrict our attention to distributed offline problems and to deriving bounds for approximation ratios therein, since the approximation ratio for the offline problem and the competitive ratio for the online problem differ by at most an additive factor of $2$.

Second, for the distributed offline problem, we present several partition schemes that require only polynomial time and space and characterize their approximation ratios.
In Section~\ref{sec:general}, 
we show that the approximation ratio for the Voronoi partition is $m$, where $m$ is the number of servers throughout this paper. 
In Section~\ref{sec:line}, we consider a class of depots that form a line in any metric space; in this case, we give a partition scheme with an approximation ratio of $O(\log m)$. 
Finally, in Section~\ref{sec:bounded}, we consider the case where the ratios between the distances of any two pairs of depots are upper bounded by a sublinear function of $m$ $(f(m))$; in this case, we give a partition scheme with an approximation ratio of $\Theta(f(m))$.

Third, adding to the value of the formulations in this paper is a wide array of open problems with rich opportunities for future explorations (discussed in Section~\ref{sec:open}), with the central one being Open Problem~\ref{open-problem-o(m)}. 
If the answer is positive, then it
will have a fundamental impact in how we view the role played by communications in routing problems. 
This is so particularly because the prior predominant paradigm has mostly been to use dynamic request-assignment scheme via restricted (i.e., local) communications intelligently, with the implicit assumption that local communications are always needed: a completely natural and reasonable assumption at the outset. Our initial results on the two deliberately chosen classes of depot configurations (see Remark~\ref{depots}) indicate that theoretical endeavors along this direction (of finding such a surprisingly good static partition) are worthwhile.
We view the proposed open problems as an open invitation towards this goal.

\subsection{Related Work}\label{sec:related_work}
There has been extensive research on problems related to ours.
Here we place our work in the broader such context by giving a review (in no way complete) of past relevant work, categorized into three classes as follows.  

\textbf{Single-server.}
Given a particular assignment of requests to servers, the problem then reduces to several single-server problems, each of which is the classical \emph{Traveling Salesman Problem} (TSP):
the single server needs to find the optimal order to visit the assigned requests and to return to the depot so that the travelled distance is minimized. 
If $P\neq NP$, no polynomial-time algorithms can solve the TSP~\cite{RefWorks:175}, and they cannot approximate the solution with a ratio better than $117/116$~\cite{RefWorks:169,RefWorks:168}.
On the other hand, \citeauthor{RefWorks:174}~\cite{RefWorks:174} give an $1.5$-approximated algorithm. 
For cases where the metric space is the Euclidean plane~\cite{RefWorks:166,RefWorks:165} or induced from a unit-weight graph~\cite{RefWorks:567,RefWorks:568,RefWorks:566,RefWorks:565}, $(1+\epsilon)$-approximated algorithms (for any $\epsilon>0$)  and $1.4$-approximated algorithms exist respectively.
See \cite{RefWorks:551} for a comprehensive review of hardness results and the classical analysis of heuristic solutions to the TSP and related problems.

\citeauthor{RefWorks:183}~\cite{RefWorks:183} consider the online TSP, in which the requests are revealed incrementally and a server that travels with a unit speed limit is to visit all the requests so as to minimize the returning time (to the original depot). 
The performance measure is the \emph{competitive ratio}, the worst case ratio between the proposed algorithm and an optimal offline algorithm that knows the locations and release dates of all requests from the start.
\citeauthor{RefWorks:183}~\cite{RefWorks:183} have proposed a $2$-competitive algorithm and proved that it is the best possible deterministic online algorithm.  


\textbf{Multi-server Single-depot.}
This class of problems also admits the natural division into offline problems and online problems, 
with the defining feature being that all the servers
share the same depot.  
The offline problems consist primarily of two scenarios. 
In the first scenario, each server is required to visit at least one request. This is known as the multiple traveling salesman problem (see~\cite{RefWorks:549} for a review). 
In the second scenario, each server has a limited capacity, and thus multiple servers are required to visit all the requests. This is known as the capacitated vehicle routing problem. See~\cite{RefWorks:548,RefWorks:547,RefWorks:546} for a review and several heuristics for this problem.

For the online problems, the usually studied objective is to minimize the returning time of the last server (i.e., a Min-Max formulation). 
\citeauthor{RefWorks:52}~\cite{RefWorks:52} have proposed a centralized deterministic algorithm with a competitive ratio of $2$ (the best possible solution).
\citeauthor{RefWorks:180}~\cite{RefWorks:180} consider a variant of this problem in which the cost of an algorithm is measured by the time when the last request is visited and the servers are not required to return to the depot. 
For this problem, when the online algorithm has $k$ servers and the offline algorithm used to define the competitive ratio has only $k^{\star}$ servers ($k^{\star}\leq k$), they propose a centralized deterministic online algorithm with competitive ratio $1+\sqrt{1+1/2^{\lfloor k/k^{\star}\rfloor -1}}$. 


Another interesting and related online problem is the online $k$-server problem, first proposed by \citeauthor{RefWorks:366}~\cite{RefWorks:366}. Here, the ambient metric space is typically a network that consists of a finite number of points (say $n$): the servers can be thought of as moving on a discrete graph.  The requests are revealed incrementally. When a request is revealed, one of the $k$ servers must move to the location of the request instantly, before knowing the subsequent requests. In this model, each request is associated with a release order (as opposed to a release date). The objective is to minimize the total travelled distance of all servers. 
The performance measure used in the literature is the competitive ratio where additive factors independent of the problem instance are allowed (in this paper, such additive factors are not allowed).
%
%

\citeauthor{RefWorks:366}~\cite{RefWorks:366} then conjecture that the lowest competitive ratio for a general ($n$-point) metric space is $k$ (the $k$-server conjecture). As an initial step, they prove that the lower bound holds, and the upper bound holds for special cases where $k=2$ or $n=k+1$. 
As substantial progress along this line, \citeauthor{RefWorks:365}~\cite{RefWorks:365} have proposed the Work Function Algorithm (WFA) and showed that it has a competitive ratio of at most $2k-1$. It is still an open question whether WFA is $k$-competitive. 
For the case where randomization on the actions of online algorithms is allowed, the randomized $k$-server conjecture states that the competitive ratio of the best randomized online algorithm is $\Theta(\log k)$. 
The randomized $k$-server conjecture holds for paging, a special case of the $k$-server problem in which the distance between any pair of points is one. 
For paging, \citeauthor{RefWorks:370}~\cite{RefWorks:370} show that the competitive ratio of the best randomized algorithm is at least the $k^{\text{th}}$ harmonic number $H_k$ ($H_k=1+\frac{1}{2}+\dots +\frac{1}{k}$). On the other hand, \citeauthor{RefWorks:368}~\cite{RefWorks:368} construct a randomized algorithm that achieves a competitive ratio of $H_k$. In addition, using a primal-dual approach, $O(\log k)-$competitive randomized online algorithms can be derived for weighted paging~\cite{RefWorks:590, RefWorks:591}.
While the lower bound of paging can be applied to the $k$-server problem, the competitive ratio of the best randomized online algorithm for the online $k$-server problem is still $2k-1$, the same as the deterministic case. For the case where the competitive ratio is allowed to be dependent on the size of the metric space (say $n$), \citeauthor{RefWorks:589}~\cite{RefWorks:589} develop a randomized online algorithm with a polylogarithmic competitive ratio $O(\log^3 n \log^2 k \log \log n)$.

The aforementioned results (for multi-server single-depot) require centralized algorithms with the knowledge of all the released requests, and hence full communication capabilities among the servers. \citeauthor{RefWorks:367}~\cite{RefWorks:367} consider the distributed version of the online $k$-server problem, in which communications are allowed but induce some costs. \citeauthor{RefWorks:367} give a translation between a centralized algorithm and a distributed one. 

\textbf{Multi-server Multi-depot.}
This class of problems are most relevant to our current formulations. 
Due to the computational complexity, few algorithms that find the optimal solution have been given~\cite{RefWorks:571,RefWorks:579,RefWorks:539} for this case. 
Researchers have focused more on finding heuristic algorithms that find sub-optimal solutions quickly. 

There are primarily three strategies that are commonly used in existing heuristic algorithms.
 The first commonly-used strategy is to first assign each request to the server at the nearest depot, and then refine the solution in a centralized fashion~\cite{RefWorks:572,RefWorks:538,RefWorks:573,RefWorks:581}. 
The second commonly-used strategy is to first assign each request to a server in a centralized fashion, and then determine the route of each server without modifying the assignment of the requests~\cite{RefWorks:541}. 
The third commonly-used strategy is to start from calculating a sub-optimal TSP tour that visits all requests. 
The TSP tour is then divided into pieces, and each piece of the tour is assigned to a server. After that, refinement of the solution is applied~\cite{RefWorks:353,RefWorks:570}. 
There are no performance guarantees for the heuristic algorithms mentioned above. 
In addition, either a centralized planner or communications between servers are required for these heuristic algorithms.

Probabilistic formulations of the multi-depot vehicle routing problems have also been considered. \citeauthor{RefWorks:342}~\cite{RefWorks:342} give a polynomial-time algorithm in which each request is assigned to the nearest depot, and then apply probabilistic analysis to the algorithm. 
They assume that the locations of the requests are independently and uniformly distributed in the unit square and all requests have the same demand, and show that the proposed algorithm achieves an approximation ratio strictly less than $2$ almost surely as the number of requests goes to infinity. 
The focus of our work here is rather different: we do not restrict our attention to any particular metric space, and we do not impose any assumptions on the number of requests nor impose any probabilistic assumptions on the locations of the requests.


The core idea of static partition, a central quantity we study in this paper, lies in dividing the ambient metric space into different regions and letting each server be responsible for requests in the corresponding region.
This idea has been applied to balance the load (defined to be the length of the tour in the solution to the TSP) of servers when the ambient metric space is a compact set in the two-dimensional Euclidean space.
Given the probability distribution of the locations of the requests, \citeauthor{RefWorks:545}~\cite{RefWorks:545} find a partition of the metric space such that the load is almost surely the same for all servers as the number of requests approaches infinity. 
When the precise probability distribution is unknown but some first and second order statistics are given, \citeauthor{RefWorks:516}~\cite{RefWorks:516} find a way to partition the metric space such that the load is most balanced under the worst-case distribution.
We note that there are several key differences from our formulations.
In addition to having different objectives (Min-Max v.s. Min-Sum), the work mentioned above considers only the asymptotic case where the number of requests approaches infinity. Moreover, the partition of the metric space is based on the distribution of the locations of the requests rather than the locations of the depots.



\section{Problem Formulation}\label{sec:formulation}
In this section, we formulate two distributed multi-depot routing problems, offline and online, where communications between servers are not allowed. In Section~\ref{sec:offline_problem}, we formulate the distributed offline problem. In Section~\ref{sec:online_problem}, we formulate the distributed online problem and state the relation between the competitive ratio of the distributed online problem and the approximation ratio of the distributed offline problem. 
For simplicity, our formulations belong to the uncapacitated vehicle routing setting. However, all of our results and analysis are applicable for the capacitated vehicle routing setting when replenishment is allowed at every depot for each server. Due to space limitation, we omit the details.

\subsection{The Distributed Offline Problem}\label{sec:offline_problem}
In the distributed offline problem, there are $m$ servers in the ambient metric space $(\mathbb{M},d)$; each of them has its initial location in one of the $m$ corresponding distinct depots $x_1, x_2, \dots , x_m \in \mathbb{M}$. 
A problem instance consists of a finite list $I$ of requests in the metric space: $I = \{ l_i \in \mathbb{M}\}_{i=1}^n$, for some positive $n$ that can vary in different problem instances. 
Each request needs to be visited by one server. The $m$ servers, which must start and end at their corresponding depots, aim to visit all requests in the most efficient way: here measured by the sum of all distances travelled by all the servers. 
Since the entire problem instance is known when finding a solution (as opposed to requests coming to be known incrementally in an online fashion as discussed in Section~\ref{sec:online_problem}), hence the name ``offline".

On the team level, the central question that immediately arises is finding a good partition of the request set, i.e., which server covers which requests; here and onwards, a partition of the request set is understood to be a disjoint collection of $m$ sets $S_1,\dots ,S_m$ whose union is $I$. 
Note that if server $i$ is (somehow) assigned to a particular subset $S_i$ of requests, then $i$'s optimal action is to, at least in principle, compute the solution $TSP_i (S_i)$ to the traveling salesman problem (the shortest tour that visits each of the requests in $S_i$ subject to the initial and final depot location $x_i$). 

The question then, at first sight, becomes how to find an efficient (or perhaps, in some sense, optimal) partition. 
However, in characterizing an efficient partition, we necessarily need to make assumptions on allowable partition-selection schemes, such as how consensus (on the assignment) is reached among the servers and what communication or collaboration process is allowed. 
If we allow for full collaboration/communication among the servers, then it effectively becomes a centralized planning problem where a central computational unit decides on the optimal partition
$\{S_i^{OPT} \}_{i=1}^m$ that achieves the minimum total cost $OPT(I)$, where
\begin{equation}\label{opt}
OPT(I) \triangleq \min_{S_1,\dots, S_m| \bigcup_{i=1}^m S_i = I} TSP_i(S_i),
\end{equation}
and then distributes the partition to all the servers in some way.

Evidently, the cost achieved in~(\ref{opt}) is the best one can hope for. 
However, there are at least two drawbacks with this formulation.
First, it is instantly clear that this problem is computationally intractable, even disregarding the NP-hardness of the TSP.
The second drawback lies in the strong assumption on the communications involved on the servers' part; moreover, for the case where the requests are revealed incrementally as defined in Section~\ref{sec:online_problem}, such communications need to happen every time a new request is revealed, since the partition of the requests is intrinsically dependent on the entire problem instance. 

Motivated by these two concerns, we naturally wonder if it is possible to find a good static partition of the entire metric space (that depends only on the locations of the depots), which then induces a partition for any request set.
Under this setting, each server needs only visit the requests that fall into its assigned region (and hence no communications between servers are required).  
The following definition formalizes this distributed partition scheme that is static in nature.  

\begin{definition}\label{definition:par}
A distributed partition scheme \textbf{par} is a function that, given the $m$ depot locations $x_1, \dots, x_m$, assigns each server $i$
to a region $M^{\textbf{par}}_i$. $$\textbf{par}(x_1, x_2, \dots, x_m) = (M^{\textbf{par}}_1, M^{\textbf{par}}_2, \dots, M^{\textbf{par}}_m),$$ where $M^{\textbf{par}}_i \subset \mathbb{M}, M^{\textbf{par}}_i \cap M^{\textbf{par}}_j = \emptyset, \forall i, j$, and $\bigcup_{i=1}^{m} M^{\textbf{par}}_i = \mathbb{M}$.
\end{definition}

\begin{remark}
We note that a static partition is sufficient for achieving the no-communication requirement, but not necessary.
In particular, we can consider a time-dependent partition (also independent of the problem instance) as follows.
$$\textbf{par}(x_1, x_2, \dots, x_m, t) = (M^{\textbf{par}}_1(t), M^{\textbf{par}}_2(t), \dots, M^{\textbf{par}}_m(t)).$$
In this setting, request $(r_j,l_j)$ is assigned to server $i$ if and only if $l_j \in M^{\textbf{par}}_i(r_j)$. Since the partition does not depend on the request set, no communications are required. \label{time_dependent_partition}
\end{remark}

Each partition scheme \textbf{par} then induces a distributed algorithm that has the cost function $DIS^\textbf{par}	$, given by
$$ DIS^\textbf{par}(I) \triangleq \sum_{i=1}^m TSP_i(S^\textbf{par}_i)$$
 where $S^\textbf{par}_i = M^\textbf{par}_i\cap I$ is the set of requests assigned to server $i$ under the partition scheme \textbf{par}.
For notational convenience, we often drop the dependence on the specific partition scheme used in the   regions $M_i$, the sets of requests $S_i$, and the cost function $DIS$ when the context shall make it clear which partition scheme we are using.
We use $DIS^\textbf{par}(I)$ as a means to measure the performance of the partition scheme \textbf{par} on the problem instance $I$. Therefore, we study the value but are not concerned about how the value can be computed. For computing the value, the reader is referred to the related work regarding the TSP, as discussed in Section~\ref{sec:related_work}.
	
We note here that the centralized-planning formulation is not entirely useless. The optimal cost $OPT(I)$ defined in \eqref{opt} is a good (and ambitious) comparison metric against which we can evaluate the solution quality of a specific partition scheme $\textbf{par}$. We wish the resulting partition from the distributed problem to be ``close" to $OPT(I)$, and the closer the better, as formalized below.

\begin{definition}
A distributed partition scheme \textbf{par} is $\alpha(m)$-approximated, if for all instances $I$ of any size $n$, $DIS(I)\leq \alpha(m) OPT(I)$. The smaller $\alpha(m)$, the better the partition scheme \textbf{par}.
\end{definition}

We emphasize here that the central problem in the distributed offline problem lies in finding a good partition scheme. 
Theorem~\ref{thm:voronoi} gives an $m$-approximated partition, hence establishing an initial benchmark. However, it is far from obvious whether any sublinear approximation ratio can be achieved in the general case. We take such a partition scheme to be an ambitious goal.

\subsection{The Distributed Online Problem}\label{sec:online_problem}
The crucial (and standard) feature in the distributed online problem is that each request is associated with a release date. 
More precisely, a problem instance $I$ is $$ I = \{ (r_1,l_1), \dots, (r_n,l_n)\mid r_1\leq r_2 \leq \dots \leq r_n , r_j \in \mathbb{R}_{\geq 0}, l_j \in \mathbb{M} \}, $$
where $r_j$ and $l_j$ are the released date and the location of request $j$, respectively, and the number of requests $n$ can vary in different problem instances as in the distributed offline problem. 
All the servers are assumed to have a unit speed limit and are to collectively visit all requests after or at their release dates. 

Due to the additional feature of the requests being online, a static partition of the requests is inadequate in specifying the corresponding cost and how each server moves must also be respected.
To formalize it, we first consider a feasible algorithm $ALG$ that determines the location of each server $i$ at time $t\in \mathbb{R}_{\geq 0}$, denoted $l^{ALG}(i,t)$, subject to the initial location constraint $l^{ALG}(i,0) = x_i$ and the unit speed limit constraint $d(l^{ALG}(i,t_2),l^{ALG}(i,t_1))\leq t_2-t_1$ for all $t_2>t_1\geq 0$. 
Under algorithm $ALG$, the completion time of request $j$, denoted $c^{ALG}_j$, is defined to be the earliest time when one of the servers arrives at the location of the request after or at its release date, i.e.,
$$ c^{ALG}_j \triangleq \inf_{t \geq r_j} \{t \mid \exists i, l^{ALG}(i,t)=l_j\}.$$ 
The cost incurred by server $i$, denoted $ALG_i$, is defined to be the earliest time at which it returns to its depot after all the requests have been served, i.e., 
\begin{align} ALG_i \triangleq \inf_{t\geq\max_{j=1}^n c^{ALG}_j} \{ t|l^{ALG}(i,t)=x_i \}.\label{def:ALG_i}\end{align}

The objective is to minimize the sum of all the time costs, given by
$$ ALG(I) \triangleq \sum_{i=1}^m ALG_i.$$

To meet the no-communication requirement, we now specialize the $ALG$ described above to the current distributed online setting. 
We specify a distributed online algorithm $DOA$ by prescribing the manner in which the trajectories $l^{DOA}(i,t)$ are determined.
 A distributed online algorithm $DOA$ needs to accomplish two tasks. 
First, as in the offline case, $DOA$ must specify a partition scheme $\textbf{par}$ as described in Definition~\ref{definition:par}. 
All the requests that appear in a given region will then only be known by the server to which that region is assigned (and hence no communications between servers are involved). 
Second, $DOA$ needs to decide, for each server $i$, the trajectory $l^{DOA}(i,t)$ based on the 
partially revealed problem instance up to time $t$ that is in $M_i$, i.e.,
$$ I^i_t \triangleq \{ (r_j,l_j)| (r_j,l_j)\in I, r_j \leq t, l_i \in M_i \}.$$

Again, we quantify the performance of a distributed offline algorithm $DOA $ by comparing the cost of a distributed online algorithm to the optimal offline cost $OPT(I)$ that is obtained with the knowledge of the entire problem instance $I$ from the start and full communication capabilities, as formalized below. 
Note that $OPT(I)$ in the online case is different from the one in the offline case because the release-date constraints need to be satisfied for the online problem.

\begin{definition}
A distributed online algorithm $DOA$ is $c(m)$-competitive, if for all instances $I$ of any size $n$, $DOA(I)\leq c(m) OPT(I)$. The smaller $c(m)$, the better the online algorithm $DOA$.
\end{definition}
\begin{remark}
In this paper, we disallow the approximation ratios and the competitive ratios to depend on $n$ by insisting that $DIS(I)\leq \alpha(m) OPT(I)$ and $DOA(I)\leq c(m) OPT(I)$ hold for all $n$.
We do so because we focus on the worst-case problem instance $I$ that may have $n$ significantly bigger than $m$.
More generally, one can allow the ratios to depend on both $m$ and $n$. Depending on the particular setting (i.e., the relation between $m$ and $n$), a good approximation ratio (and competitive ratio) can be decided accordingly.
\end{remark}

The following theorem (proved in Appendix~\ref{proof:thm:relations-distributed-algorithms})
indicates that solving the distributed offline problem is effectively equivalent to solving the distributed online problem. We will therefore restrict our attention to the offline problem in this paper.

\begin{theorem} \label{thm:relations-distributed-algorithms}
If there is an $\alpha-$approximated partition scheme \textbf{par} for the distributed offline problem, then there exists an $(\alpha+2)-$competitive distributed online algorithm for the distributed online problem.
\end{theorem}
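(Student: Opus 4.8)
The plan is to turn the given $\alpha$-approximated scheme \textbf{par} into an online algorithm in the obvious distributed way and then to separate the two sources of loss: using a fixed partition instead of the optimal assignment (which should cost the $\alpha$), and not seeing the future (which should cost the $+2$). Concretely, I would let $DOA$ assign to each server $i$ the region $M^{\textbf{par}}_i$ of Definition~\ref{definition:par}, and have server $i$ run, on the sub-stream of requests that fall in $M^{\textbf{par}}_i$, a $2$-competitive single-server online TSP routine such as the one of \cite{RefWorks:183}. No communication is used, since each server only ever sees its own region. The goal is then to prove $DOA(I)\le(\alpha+2)\,OPT(I)$ for every online instance $I$, where $OPT(I)$ is the online optimum.

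Before the charging I would record two structural comparisons. First, writing $OPT_{\mathrm{off}}(I)$ for the offline optimum of \eqref{opt}, I claim $OPT_{\mathrm{off}}(I)\le OPT(I)$: in an optimal online solution each server $i$ traces, from its depot and back, a closed walk through the set of requests it happens to serve; by the unit-speed constraint this walk has length at most the server's time cost $OPT_i$, and by definition of $TSP_i$ it has length at least $TSP_i$ of that set, so summing over $i$ and minimizing over partitions gives the claim. Consequently $DIS(I)\le\alpha\,OPT_{\mathrm{off}}(I)\le\alpha\,OPT(I)$. Second, by \eqref{def:ALG_i} every server---even an idle one---incurs cost at least the global completion time $C:=\max_j c_j$, and $C\ge r_{\max}:=\max_j r_j$; hence $OPT(I)\ge m\,C^{\star}\ge m\,r_{\max}$, where $C^{\star}$ is the completion time of the optimum.

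For the charging I would decompose $DOA$ through the same global-completion structure. Let $T_i$ be the time server $i$ finally returns home and $C^{DOA}=\max_j c_j$ the global completion time under $DOA$; then $DOA(I)=\sum_i\max(T_i,C^{DOA})=m\,C^{DOA}+\sum_i(T_i-C^{DOA})^{+}$. The ``return'' terms are easy: once $C^{DOA}$ has passed, server $i$ has already served all of $S^{\textbf{par}}_i$ and only needs to walk home along the remainder of its current route, so $(T_i-C^{DOA})^{+}\le TSP_i(S^{\textbf{par}}_i)$ and these terms sum to at most $DIS(I)\le\alpha\,OPT(I)$. It remains to charge the dominant term $m\,C^{DOA}$: here I would use the per-server $2$-competitiveness to split $C^{DOA}$ into a release-driven part, bounded via $m\,r_{\max}\le OPT(I)$ to yield the additive $2$, and a travel-driven part, which is what the $\alpha$-approximation of \textbf{par} is designed to control through $OPT_{\mathrm{off}}(I)\le OPT(I)$.

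The main obstacle is exactly this last step, and it is subtle because the ``return after all requests are served'' clause in \eqref{def:ALG_i} couples the servers: a single server forced to cover a whole region can complete strictly later than the cooperating optimum, and the amount by which it does so is governed by $\alpha$ (through the partition), not by the online constant $2$, while the genuine online overhead must instead be absorbed by the $m\,r_{\max}$ slack. The difficulty is that these two losses are not simultaneously worst-case on the same instance---a far, early request saturates the $\alpha$ loss with essentially no online overhead, whereas an adversarial release schedule saturates the factor $2$ with a trivial partition---so a naive term-by-term bound makes the factors multiply (giving $2\alpha$) rather than add. I would therefore carry out a single joint accounting of $m\,C^{DOA}+\sum_i(T_i-C^{DOA})^{+}$ against the two independent lower bounds $OPT(I)\ge m\,C^{\star}$ and $OPT(I)\ge OPT_{\mathrm{off}}(I)$, arguing that the distance budget $\alpha\,OPT(I)$ and the waiting budget $2\,OPT(I)$ are charged to disjoint portions of $OPT(I)$, so that the two ratios add to $\alpha+2$.
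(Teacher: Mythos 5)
Your proposal is not yet a proof: it stops exactly at the step where the content lies. The decomposition $DOA(I)=m\,C^{DOA}+\sum_i(T_i-C^{DOA})^{+}$, the two lower bounds $OPT(I)\geq OPT_{\mathrm{off}}(I)$ and $OPT(I)\geq m\,r_{\max}$, and the bound on the return legs are all fine, but the dominant term $m\,C^{DOA}$ is never actually bounded---your last paragraph replaces the estimate by an intention (``I would carry out a single joint accounting''). Worse, under the literal reading of \eqref{def:ALG_i} that you adopt (every server, even an idle one, pays the global completion time of $DOA$), no such accounting exists for this construction: $C^{DOA}$ is governed by $\max_i TSP_i(S^{\textbf{par}}_i)$, whereas the $\alpha$-approximation of \textbf{par} only controls the sum $\sum_i TSP_i(S^{\textbf{par}}_i)$, and $m\cdot\max_i$ can exceed $\alpha\, OPT(I)$ by a factor of order $m/\alpha$. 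Concretely, take depots on a line in $\mathbb{R}^2$ with unit spacing; any $\alpha$-approximated partition must contain the ball of radius $1/(2(\alpha+1))$ around $x_j$ inside $M_j$ (single-request instances force $d(x_i,p)\leq\alpha\min_k d(x_k,p)$ for $p\in M_i$), so an adversary can release at time $0$ a fine grid of $n$ points inside that ball whose tour length $L=TSP_j$ is arbitrarily large. Then every server in your $DOA$ pays at least $C^{DOA}\geq L-1$, so $DOA(I)\geq m(L-1)$, while the centralized optimum cuts the grid tour into $m$ pieces and pays at most $L+O(m^2)$; the ratio tends to $m$, not $\alpha+2$. So the obstacle you flagged as ``subtle'' is in fact fatal to the plan as stated, and no charging argument repairs it.

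The paper's proof never meets this obstacle because it charges the online algorithm per server: it uses the lazy strategy (when a request arrives in $M_i$, server $i$ returns to $x_i$ and restarts $TSP_i(R^i_t)$) and bounds $DOA_i\leq 2r_n+TSP_i(R^i)$, i.e.\ server $i$ pays only for waiting out the last release in its own region plus its own tour, with no term for other regions' completions; this is precisely the per-server accounting that your literal reading of \eqref{def:ALG_i} refuses (it corresponds to the variant cost discussed in Section~\ref{sec:open}), while the lower bound $OPT_i\geq r_n$ for every $i$ is obtained from the global reading applied to $OPT$. Granting that accounting for $DOA$, the theorem follows in two lines from bounds you already derived: $DOA(I)\leq 2m\,r_n+\sum_i TSP_i(R^i)\leq 2\,OPT(I)+\alpha\,OPT(I)$, using $OPT(I)\geq m\,r_n$ and $\sum_i TSP_i(R^i)=DIS(I)\leq\alpha\,OPT_{\mathrm{off}}(I)\leq\alpha\,OPT(I)$. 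Note also that, even setting aside the coupling issue, your plug-in of the $2$-competitive algorithm of \cite{RefWorks:183} cannot give $\alpha+2$: per region it yields only $T_i\leq 2(r_n+TSP_i(S^{\textbf{par}}_i))$, hence $2\alpha+2$ overall. The coefficient $1$ on the travel term comes from the paper's bespoke lazy strategy, which cleanly separates the waiting overhead (charged to $2m\,r_n$) from the touring cost (charged exactly once).
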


\section{Static Partition Schemes for the Distributed Offline Problem}\label{sec:special_configurations}
\subsection{General Depot Configuration}\label{sec:general}

We consider the Voronoi partition $VOR$ in which a point $p\in \mathbb{M}$ is in $M^{VOR}_i$ if $x_i$ is the nearest depot for $p$ (ties broken arbitrarily). The following theorem (proved in Appendix~\ref{proof:thm:voronoi}) indicates that $O(m)$ partition schemes exist for general depot configurations.

\begin{theorem}\label{thm:voronoi}
The approximation ratio of the Voronoi partition $VOR$ is exactly $m$.
\end{theorem}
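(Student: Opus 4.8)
The statement has two halves: an upper bound $DIS^{VOR}(I)\le m\cdot OPT(I)$ valid for every instance $I$, and a family of instances whose ratio tends to $m$. The plan is to establish each separately, the former giving $\alpha(m)\le m$ and the latter $\alpha(m)\ge m$.

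For the upper bound I would compare the Voronoi tours against a fixed optimal solution with partition $S_1^{OPT},\dots,S_m^{OPT}$ and tours $T_1^{OPT},\dots,T_m^{OPT}$ of lengths $\ell_k=TSP_k(S_k^{OPT})$, so that $OPT(I)=\sum_k\ell_k$. The central idea is to charge each Voronoi server's work to the optimal tours. For a fixed server $i$, group its requests $S_i^{VOR}=M_i^{VOR}\cap I$ according to the optimal tour on which they lie: set $A_{i,k}=S_i^{VOR}\cap S_k^{OPT}$. I would build a feasible tour for server $i$ as a concatenation of one excursion per nonempty $A_{i,k}$, where the excursion for tour $k$ leaves $x_i$, reaches the first request of $A_{i,k}$ in the cyclic order of $T_k^{OPT}$, follows the arc of $T_k^{OPT}$ that spans $A_{i,k}$, and returns to $x_i$ from the last request. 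Hence $TSP_i(S_i^{VOR})\le\sum_k c_{i,k}$, where $c_{i,k}$ is the length of the tour-$k$ excursion.

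The key estimate, and the step that makes the constant exactly $m$ rather than a loose $2m$, is $c_{i,k}\le\ell_k$. Cutting the cyclic tour $T_k^{OPT}$ at its depot $x_k$ and letting $F_a,F_b$ be the forward arc-lengths from $x_k$ to the first and last requests $u_a,u_b$ of $A_{i,k}$, the excursion costs $d(x_i,u_a)+(F_b-F_a)+d(u_b,x_i)$. I would bound the entry by the forward arc and the exit by the backward arc: the Voronoi property gives $d(x_i,u_a)\le d(x_k,u_a)\le F_a$ and $d(x_i,u_b)\le d(x_k,u_b)\le \ell_k-F_b$, each by the triangle inequality along the tour, whence $c_{i,k}\le F_a+(F_b-F_a)+(\ell_k-F_b)=\ell_k$. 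Summing gives $\sum_i TSP_i(S_i^{VOR})=\sum_k\sum_i c_{i,k}\le\sum_k m\,\ell_k=m\cdot OPT(I)$, since at most $m$ servers have a nonempty share of any single optimal tour. The delicate point is precisely this asymmetric accounting---forward arc for entry, backward arc for exit; a naive round trip to one entry point would lose the factor of two and only yield $2m$.

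For the lower bound I would exhibit instances whose ratio approaches $m$. Place the $m$ depots so that some point $P$ is (nearly) equidistant from all of them and far away---for instance depots on a circle of radius $r$ with $P$ its center---and put $m$ requests in a tiny cluster around $P$, perturbed so that exactly one falls in each Voronoi region $M_i^{VOR}$. Under $VOR$ each server makes a round trip to its own request near $P$, so $DIS^{VOR}(I)\approx 2mr$, whereas an omniscient planner sends a single server to sweep the whole cluster, giving $OPT(I)\approx 2r$ (any solution using two or more servers pays at least two such round trips and is strictly worse). The ratio therefore tends to $m$ as the cluster radius shrinks. The main obstacle I expect is the upper bound---specifically, arranging the charging so that the constant is the tight $m$ and not $2m$; on the lower-bound side the points to verify are that the perturbed requests genuinely occupy distinct Voronoi cells and that the single-server tour is optimal. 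Together the two halves show the worst-case ratio is exactly $m$.
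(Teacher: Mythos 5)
Your proof is correct and takes essentially the same approach as the paper: the upper bound is the same excursion-charging argument (for each optimal tour, enter at the first Voronoi-owned request, follow the tour's arc, exit at the last, bounding the entry by the forward arc and the exit by the backward arc via the Voronoi inequality $d(x_i,p)\le d(x_k,p)$ plus the triangle inequality along the tour), and your lower-bound construction with requests clustered near a point equidistant from all depots is the same family of examples the paper invokes (by citation and in its appendix). The only difference is bookkeeping: the paper fixes one Voronoi server and shows its entire tour costs at most $OPT(I)$, then sums over servers, whereas you sum the identical charges per optimal tour; the two double counts coincide.
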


The central open problem (Open Problem~\ref{open-problem-o(m)}) is whether there exists a partition scheme that does better (sublinear in $m$) than the baseline Voronoi partition in the general case.
To make progress towards this direction, we consider two special classes of depot configurations and identify such sublinear partition schemes. We note that the Voronoi partition, when specialized to these two classes of depot configurations, still gives an approximation ratio of $\Omega(m)$ (see Appendix~\ref{examples} for examples). 
\begin{remark}
These two classes of depot configurations are deliberately chosen as they stand on the two extremes of the general case: the line case (Section~\ref{sec:line}) is the most ``stretched-out'' depot configuration and the bounded-ratio case (Section~\ref{sec:bounded}) is the most ``clustered'' depot configuration.
The existence of sublinear partition schemes, although different, in the two extremes leads us to conjecture that there exists a sublinear partition scheme for the general case.\label{depots}\end{remark}
\subsection{Depots on A Line}\label{sec:line}
Here we consider the case in which the depots form a line in the metric space, i.e., $d(x_i, x_j)+d(x_j, x_k)=d(x_i,x_k)$ for any $1\leq i < j < k \leq m$. For this configuration of depots, we give a partition scheme (the Level partition $LEV$) with an approximation ratio of $O(\log m)$.
It is not clear whether our analysis is tight: the proposed partition scheme may have a lower asymptotic approximation ratio. 
 
\textbf{The Level Partition.}
For notational convenience, we assume that there are $2^k+1$ servers for some integer $k \ge 0$, indexed in order as $0,1,2,\dots, m-1 (= 2^k)$.  
If the number of servers is not $2^k+1$, then we can duplicate the last depot for the filling: creating enough copies (at most $m-3$) of depot $m-1$ so that the total number of depots is brought to $2^k+1$. The partition scheme and analysis still apply.
 
For each integer $l=0,1,\dots, k-1$, let $\mathbb{N}_l$ denote the set of integers between $1$ and $2^k-1$ that are multiples of $2^{l}$ but not $2^{l+1}$, i.e., $\mathbb{N}_l \triangleq \{ 2^{l}(2t+1)| t=0,1,2,\dots 2^{k-l-1}-1\}$. As special cases, we denote $\mathbb{N}_k \triangleq \{ 2^k\}$ and $\mathbb{N}_{k+1} \triangleq \{ 0 \}$.
We say that server $i$ is in level $l$ if $i \in \mathbb{N}_l$,
hence the name Level partition.

For $i=1,2,\dots,  2^k-1$, we denote by $\tau_i$ the intersection of two particular closed circular disks:\begin{align*}
\tau_i  \triangleq & \{p\in \mathbb{M} \mid d(p,x_{i-2^l}) \leq d(x_{i-2^l}, x_{i}) + \lambda d(x_{i+2^l},x_i)  \} 
\\  \cap & \{p\in \mathbb{M} \mid d(p,x_{i+2^l}) \leq d(x_{i+2^l}, x_{i}) + \lambda d(x_{i-2^l},x_i)  \},
\end{align*}
where $\lambda \triangleq 3/4$ for simplicity (in fact, any fixed constant in $(1/2,1)$ will do) and $l$ is the integer such that $i\in \mathbb{N}_l$. 
For special cases where $i=2^k, 0$, we define $\tau_{2^k}  \triangleq \{ p\in \mathbb{M} \mid d(p,x_{2^k}) \leq \lambda d(x_{0},x_{2^k})  \}$ and $\tau_{0}  \triangleq  \mathbb{M}$.


For each $i=0, 1,2,\dots , 2^k$ with server $i$ in level $l$ (i.e.,\ $i \in \mathbb{N}_l$), we define $M^{LEV}_i$ to be the points in $\tau_i$ but not in $\tau_{i'}$ for any lower-level $i'$, i.e., 
$$ M^{LEV}_i \triangleq \tau_i \setminus \bigcup_{l'=0}^{l-1} \bigcup_{i' \in \mathbb{N}_{l'}} \tau_{i'}.$$
See Figure~\ref{figure:partition} for an illustration of the Level partition with $m=9$.
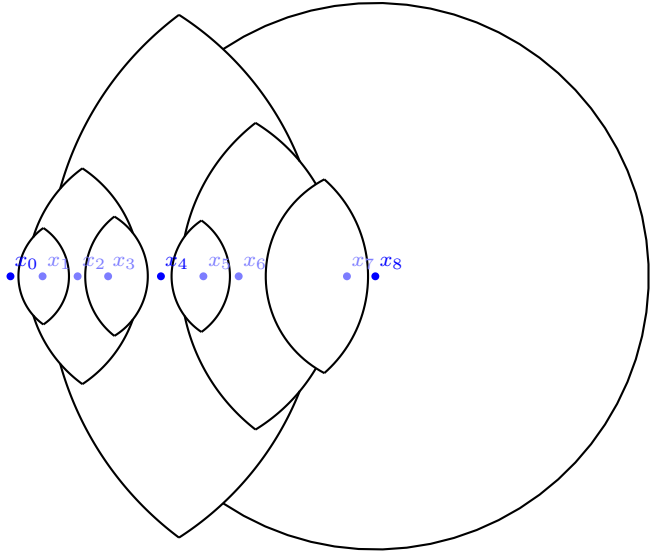
\begin{figure}
	\centering  
\newrgbcolor{xdxdff}{0.49 0.49 1}
\psset{xunit=0.4cm,yunit=0.4cm,algebraic=true,dimen=middle,dotstyle=o,dotsize=3pt 0,linewidth=0.8pt,arrowsize=3pt 2,arrowinset=0.25}

\begin{pspicture*}(-0.75,-9.26)(21.66,9.31)
\parametricplot{-0.9737673392981776}{0.9737673392981779}{1*1.94*cos(t)+0*1.94*sin(t)+0|0*1.94*cos(t)+1*1.94*sin(t)+0}
\parametricplot{2.1901715727438518}{4.0930137344357345}{1*1.97*cos(t)+0*1.97*sin(t)+2.23|0*1.97*cos(t)+1*1.97*sin(t)+0}
\parametricplot{-1.0200656333154132}{1.0200656333154132}{1*2.33*cos(t)+0*2.33*sin(t)+2.23|0*2.33*cos(t)+1*2.33*sin(t)+0}
\parametricplot{2.2343426801820088}{4.048842626997578}{1*2.52*cos(t)+0*2.52*sin(t)+5|0*2.52*cos(t)+1*2.52*sin(t)+0}
\parametricplot{2.200722801352747}{2.8449435983541216}{1*4.44*cos(t)+0*4.44*sin(t)+5|0*4.44*cos(t)+1*4.44*sin(t)+0}
\parametricplot{3.4382417088254646}{4.08246250582684}{1*4.44*cos(t)+0*4.44*sin(t)+5|0*4.44*cos(t)+1*4.44*sin(t)+0}
\parametricplot{0.34027338489987474}{0.9845639557449344}{1*4.31*cos(t)+0*4.31*sin(t)+0|0*4.31*cos(t)+1*4.31*sin(t)+0}
\parametricplot{5.298621351434652}{5.942911922279712}{1*4.31*cos(t)+0*4.31*sin(t)+0|0*4.31*cos(t)+1*4.31*sin(t)+0}
\parametricplot{-0.944424116253006}{0.9444241162530064}{1*2.29*cos(t)+0*2.29*sin(t)+5|0*2.29*cos(t)+1*2.29*sin(t)+0}
\parametricplot{2.1606835658256918}{4.122501741353895}{1*2.23*cos(t)+0*2.23*sin(t)+7.58|0*2.23*cos(t)+1*2.23*sin(t)+0}
\parametricplot{-0.8466297958941489}{0.8466297958941484}{1*4.3*cos(t)+0*4.3*sin(t)+7.58|0*4.3*cos(t)+1*4.3*sin(t)+0}
\parametricplot{2.052956074800722}{4.230229232378864}{1*3.64*cos(t)+0*3.64*sin(t)+12.12|0*3.64*cos(t)+1*3.64*sin(t)+0}
\parametricplot{2.2343852162131066}{2.937048268568464}{1*6.47*cos(t)+0*6.47*sin(t)+12.12|0*6.47*cos(t)+1*6.47*sin(t)+0}
\parametricplot{3.3461370386111224}{4.04880009096648}{1*6.47*cos(t)+0*6.47*sin(t)+12.12|0*6.47*cos(t)+1*6.47*sin(t)+0}
\parametricplot{0.5359451922462065}{1.0201116440717135}{1*5.99*cos(t)+0*5.99*sin(t)+5|0*5.99*cos(t)+1*5.99*sin(t)+0}
\parametricplot{5.263073663107873}{5.74724011493338}{1*5.99*cos(t)+0*5.99*sin(t)+5|0*5.99*cos(t)+1*5.99*sin(t)+0}
\parametricplot{5.284287180547575}{5.90493735664418}{1*10.34*cos(t)+0*10.34*sin(t)+0|0*10.34*cos(t)+1*10.34*sin(t)+0}
\parametricplot{0.37824795053540533}{0.9988981266320114}{1*10.34*cos(t)+0*10.34*sin(t)+0|0*10.34*cos(t)+1*10.34*sin(t)+0}
\parametricplot{2.2144921677671663}{2.87020208504102}{1*10.87*cos(t)+0*10.87*sin(t)+12.12|0*10.87*cos(t)+1*10.87*sin(t)+0}
\parametricplot{3.4129832221385663}{4.06869313941242}{1*10.87*cos(t)+0*10.87*sin(t)+12.12|0*10.87*cos(t)+1*10.87*sin(t)+0}
\parametricplot{-2.160885911767373}{2.160885911767373}{1*9.09*cos(t)+0*9.09*sin(t)+12.12|0*9.09*cos(t)+1*9.09*sin(t)+0}
\begin{scriptsize}
\psdots[dotstyle=*,linecolor=blue](5,0)
\rput[bl](5.14,0.24){\blue{$x_4$}}
\psdots[dotstyle=*,linecolor=blue](0,0)
\rput[bl](0.14,0.24){\blue{$x_0$}}
\psdots[dotstyle=*,linecolor=xdxdff](2.23,0)
\rput[bl](2.39,0.24){\xdxdff{$x_2$}}
\psdots[dotstyle=*,linecolor=xdxdff](1.06,0)
\rput[bl](1.23,0.24){\xdxdff{$x_1$}}
\psdots[dotstyle=*,linecolor=xdxdff](3.24,0)
\rput[bl](3.4,0.24){\xdxdff{$x_3$}}
\psdots[dotstyle=*,linecolor=blue](12.12,0)
\rput[bl](12.27,0.24){\blue{$x_8$}}
\psdots[dotstyle=*,linecolor=xdxdff](6.41,0)
\rput[bl](6.58,0.24){\xdxdff{$x_5$}}
\psdots[dotstyle=*,linecolor=xdxdff](7.58,0)
\rput[bl](7.74,0.24){\xdxdff{$x_6$}}
\psdots[dotstyle=*,linecolor=xdxdff](11.18,0)
\rput[bl](11.34,0.24){\xdxdff{$x_7$}}
\end{scriptsize}
\end{pspicture*}
  \vspace{-1em}
\caption{The Level Partition}   \vspace{-2em}
\label{figure:partition}
\end{figure}

Our main result regarding the Level partition is the following theorem.
\begin{theorem} \label{thm:competitive-level}
	$DIS^{LEV}(I) \leq O(\log m) OPT(I).$ 
\end{theorem}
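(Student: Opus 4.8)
The plan is to exploit the dyadic (binary-tree) structure of the level assignment and reduce the global bound to a per-level bound, so that the $O(\log m)$ factor arises simply as the number of levels $k+2 = O(\log m)$. I would write $DIS^{LEV}(I) = \sum_{l=0}^{k+1}\sum_{i\in\mathbb{N}_l} TSP_i(S_i)$ with $S_i = M^{LEV}_i\cap I$, and then prove a single main lemma: for every level $l$, $\sum_{i\in\mathbb{N}_l} TSP_i(S_i)\le C\cdot OPT(I)$ for an absolute constant $C$ depending only on $\lambda$. Summing this over the $k+2$ levels immediately yields the theorem. (The boundary indices $i=0$ and $i=2^k$, whose $\tau_i$ are defined specially, are handled by the same estimates applied to their single defining ball.)

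Before the main lemma I would establish three structural facts. First, a \emph{localization} bound: for any $p\in\tau_i$ with $i\in\mathbb{N}_l$, combining the two ball constraints defining $\tau_i$ with the triangle inequality and the colinearity identity $d(x_{i-2^l},x_i)+d(x_i,x_{i+2^l})=d(x_{i-2^l},x_{i+2^l})$ gives $d(x_i,p)\le \tfrac{2+\lambda}{2}\,w_i$, where $w_i\triangleq d(x_{i-2^l},x_{i+2^l})$; in particular every request handled by server $i$ lies within a controlled radius of its own depot. Second, a \emph{tiling identity}: the index intervals $[i-2^l,\,i+2^l]$ for $i\in\mathbb{N}_l$ (the odd multiples of $2^l$) exactly partition $[0,2^k]$, so by colinearity $\sum_{i\in\mathbb{N}_l} w_i = d(x_0,x_{2^k})$ for \emph{every} level $l$. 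Third, a \emph{finest-cell property}: since $M^{LEV}_i=\tau_i\setminus\bigcup_{l'<l}\bigcup_{i'\in\mathbb{N}_{l'}}\tau_{i'}$, a request is assigned to $i$ only if no strictly finer server already claims it, which forces a request assigned at level $l$ to be bounded away, by a constant times the local scale, from all finer depots, and hence from its nearest depot overall.

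For the main lemma I would split each server's tour into an \emph{interior} part and a \emph{detour} part. Using subadditivity of the depot-rooted tour, $TSP_i(A\cup B)\le TSP_i(A)+TSP_i(B)$, together with shortcutting of the optimal tours, I would charge the interior cost (threading through the requests of $S_i$) to disjoint portions of the optimal tours, so that these interior costs sum to $O(1)\cdot OPT(I)$ within a single level; the fact that same-level regions are ordered along the line keeps the optimal tours from being forced to alternate, which is what prevents over-counting here. The detour cost of server $i$ is at most $2\rho_i$ with $\rho_i=\max_{p\in S_i} d(x_i,p)\le \tfrac{2+\lambda}{2}\,w_i$, and I would charge it against $OPT(I)$ via the finest-cell property: the farthest request of $S_i$ is far from every depot, so the optimal solution must pay a spur of comparable length to reach it, while the tiling identity keeps the bookkeeping bounded. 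A baseline lower bound such as $OPT(I)\ge 2\max_{p\in I}\min_j d(x_j,p)$ anchors the smallest scales.

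The main obstacle is precisely this detour-charging step. Same-level regions $\tau_i$ genuinely overlap, since their radii are comparable to the spacing of their centers, so a request far from server $i$'s depot may be cheap for the optimal solution to serve from a different depot; naively summing $\sum_{i\in\mathbb{N}_l} 2\rho_i$ only gives $O(1)\cdot d(x_0,x_{2^k})$, which is \emph{not} controlled by $OPT(I)$ when few requests lie near the endpoints. Overcoming this requires the finest-cell property to certify that requests assigned at level $l$ are genuinely far from all depots (so the optimal solution truly pays for them), combined with the disjointness of the $S_i$ and the tiling identity to guarantee that each unit of optimal cost is charged only a bounded number of times within a level. Making this charging simultaneously tight and non-overlapping is the delicate heart of the argument; everything else is triangle-inequality bookkeeping on the line.
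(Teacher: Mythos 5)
Your top-level reduction is exactly the paper's: decompose $I$ by level, prove a constant-factor bound for each level, and sum over the $k+2 = O(\log m)$ levels using $OPT(I\cap L_l)\le OPT(I)$. That part is correct. The gap is in the per-level lemma, which is the entire substance of the proof, and you have not actually established it. You concede this yourself in your final paragraph: the detour-charging step --- showing that $\sum_{i\in\mathbb{N}_l} 2\rho_i$ can be charged to $OPT(I)$ rather than to $d(x_0,x_{2^k})$ --- is left as an acknowledged obstacle, with only a list of ingredients (finest-cell property, disjointness, tiling identity) that ``should'' overcome it. Moreover, the finest-cell property itself is not free: a point outside $\tau_{i'}$ is only guaranteed to be far from \emph{one} of the two ball centers $x_{i'\pm 2^{l'}}$ defining $\tau_{i'}$, not far from $x_{i'}$ or from every relevant depot, and turning ``not in any finer $\tau_{i'}$'' into the quantitative statement $d(p,x_j)\ge g\, d(x_i,x_j)$ is precisely the claim the paper proves by a page-long case analysis (Claim~(\ref{inequality:distance}) in the proof of Lemma~\ref{lemma:opt-to-type1}).

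Your interior-charging step has a second, unacknowledged problem: you assert that ``same-level regions are ordered along the line,'' which ``keeps the optimal tours from being forced to alternate.'' Nothing forbids an optimal (or intermediate) tour from zigzagging between regions $\tau_j,\tau_{j'},\tau_j,\tau_{j'}$, and handling exactly this oscillation is why the paper needs a dedicated lemma (Lemma~\ref{lemma:type1-to-non-oscillating}, costing a factor of $4$). You also need to handle the fact that $OPT$'s tours are rooted at depots of \emph{all} levels, not just level $l$; the paper deals with this first, converting $OPT$ into a ``responsible'' solution using only level-$l$ servers at a factor $1+1/g$ (Lemma~\ref{lemma:opt-to-type1}), before removing oscillations and finally comparing against the Level partition cost via the geometric inequalities~(\ref{inequality:middle_point}) and~(\ref{inequality:right_point}) (Lemma~\ref{lemma:non-oscillating-to-alg}). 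So while your skeleton and several of your structural facts (the localization bound $d(x_i,p)\le\frac{2+\lambda}{2}w_i$, the tiling identity) are sound and do appear in the paper's argument, the chain of reductions $OPT\to RES\to NOS\to LEV$ that makes the charging non-circular is missing, and without it the proof does not go through.
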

In order to prove Theorem~\ref{thm:competitive-level}, we divide the metric space $\mathbb{M}$ into $k+2$ different levels of regions $L_0,L_1, \dots L_{k+1}$, where $L_l$ is defined to be the union of regions assigned to servers in level $l$, i.e., $$L_l \triangleq \bigcup_{i \in \mathbb{N}_l}M^{LEV}_i.$$

The definition of $\{ L_l \}_{l=0}^{k+1}$ is used to prove the following lemma. 
\begin{lemma}\label{lemma:each_level}
	There exists a constant $\rho$ such that $$LEV (I )\leq \rho OPT(I )$$ if all requests are in the same level $L_l$, i.e., $I \subseteq L_l$, for any integer $l=0,1,\dots, k+1$. 
\end{lemma}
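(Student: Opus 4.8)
Since $I\subseteq L_l$ and the sets $\{L_{l'}\}$ are pairwise disjoint (being unions of the pairwise-disjoint regions $M_i^{LEV}$ over the index blocks $\mathbb{N}_{l'}$), every request lies in the region of some level-$l$ server, so only those servers do any work and $LEV(I)=\sum_{i\in\mathbb{N}_l}TSP_i(S_i^{LEV})$ with each $S_i^{LEV}=M_i^{LEV}\cap I\subseteq\tau_i$. The goal is a constant $\rho=\rho(\lambda)$ with $\sum_{i\in\mathbb{N}_l}TSP_i(S_i^{LEV})\le\rho\,OPT(I)$. I would get there in two stages: first a purely geometric \emph{locality} statement saying that a point of $M_i^{LEV}$ is never much farther from its assigned depot $x_i$ than from any other depot, and then a surgery that rebuilds the level-$l$ tours out of the optimal tours at a controlled cost.

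\textbf{Geometric facts.} Writing $a=d(x_{i-2^l},x_i)$ and $b=d(x_i,x_{i+2^l})$ and using the line condition $d(x_{i-2^l},x_{i+2^l})=a+b$, every $p\in\tau_i$ obeys, from the ball centred at $x_{i+2^l}$, $d(p,x_{i-2^l})\ge(a+b)-(b+\lambda a)=(1-\lambda)a$ and symmetrically $d(p,x_{i+2^l})\ge(1-\lambda)b$, while at the same time $d(x_i,p)\le a+d(p,x_{i-2^l})$ and $d(x_i,p)\le b+d(p,x_{i+2^l})$. Thus a point of $\tau_i$ keeps a fixed fraction of the neighbour distance away from the two scale-$2^l$ neighbours, yet stays within a comparable distance of $x_i$; this is exactly the effect of taking $\lambda\in(1/2,1)$. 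Complementarily, each $\tau_{i'}$ contains the ball of radius $\lambda\min(a_{i'},b_{i'})$ about $x_{i'}$, where $a_{i'},b_{i'}$ are the neighbour distances at $i'$'s own scale; hence a point of $M_i^{LEV}$, being \emph{outside} every lower-level $\tau_{i'}$ by construction, is kept a definite distance away from every lower-level depot. These two facts are the whole content of the choice $\lambda\in(1/2,1)$ together with the set-difference in the definition of $M_i^{LEV}$.

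\textbf{Locality lemma and tour surgery.} From these I would prove: there is $C=C(\lambda)$ with $d(x_i,p)\le C\,d(x_q,p)$ for every $i\in\mathbb{N}_l$, every $p\in M_i^{LEV}$, and every depot $x_q$. For $q=i\pm2^l$ this is immediate from the margin bounds, since $d(x_i,p)\le a+d(p,x_{i-2^l})\le(1+\tfrac{1}{1-\lambda})\,d(p,x_{i-2^l})$ and similarly on the other side; for the remaining $q$ one shows that avoiding all finer-level regions confines $M_i^{LEV}$ to within $O(1)$ times the distance from $x_i$ to its nearest depot, so no depot can be more than a constant factor closer than $x_i$. Granting the lemma, I would bound $LEV(I)$ by surgery on the optimal solution: cut each optimal tour $T_j$ into maximal arcs, one arc per region $M_i^{LEV}$ it passes through, and let server $i$ serve each arc $A\subseteq M_i^{LEV}$ of $T_j$ by running from $x_i$ to $A$, traversing $A$, and returning. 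Short-cutting keeps the total arc length at most $OPT(I)$, and each detour costs $2\min_{p\in A}d(x_i,p)\le 2C\min_{p\in A}d(x_j,p)$ by the lemma; charging these detours to disjoint stretches of the tours $T_j$ — possible because the $(1-\lambda)$-margins force successive visits of one tour to a fixed region to be separated — gives $LEV(I)\le\rho\,OPT(I)$.

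\textbf{Main obstacle.} The crux is the locality lemma for the \emph{interior} depots $x_q$ with $i-2^l<q<i+2^l$, $q\neq i$: such a depot sits at a strictly finer scale than $x_i$, so the bare estimate $d(x_q,p)>\lambda\min(a_q,b_q)$ is far too weak on its own, and one must show that the \emph{cumulative} deletion of the finer regions $\tau_{i'}$ over all levels $l'<l$ really does squeeze $M_i^{LEV}$ into an $O(1)$-neighbourhood of $x_i$. I expect this to require an inductive argument over the levels $0,1,\dots,l-1$, carried out for request points that may lie off the line in the ambient metric space (where only the depots, not the requests, are collinear). A secondary difficulty is making the detour-charging genuinely non-overlapping, so that the resulting $\rho$ depends only on $\lambda$ and not on $m$.
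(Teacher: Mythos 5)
Your two-stage plan --- a geometric locality statement followed by surgery on the optimal tours --- is the same overall strategy as the paper's, and your locality lemma is essentially the paper's key Claim~(\ref{inequality:distance}) in the proof of Lemma~\ref{lemma:opt-to-type1}: the paper proves $d(p,x_q)\ge g\,d(x_i,x_q)$ for $p\in M^{LEV}_i$, which is equivalent (up to constants, via the triangle inequality) to your $d(x_i,p)\le C\,d(x_q,p)$. You also correctly identify that the hard case is the interior, finer-scale depots and that it requires an induction/case analysis over the levels $l'<l$ for off-line request points; this is exactly the multi-case analysis the paper carries out, so deferring it leaves your stage one incomplete but on the right track.

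The genuine gap is in your charging step, and it is precisely the issue that the paper's Lemma~\ref{lemma:type1-to-non-oscillating} (de-oscillation at a factor of $4$) exists to repair. Your justification --- that ``the $(1-\lambda)$-margins force successive visits of one tour to a fixed region to be separated'' --- gives a separation between two adjacent same-level regions of order $(1-\lambda)$ times the relevant inter-depot distances, but the detour you pay per arc is $d(x_i,p)$, and these two quantities are not comparable when requests lie off the line. Concretely, in the Euclidean plane take consecutive depot spacings $1,\epsilon,\epsilon,1$, so that $\tau_i$ and $\tau_{i'}$ (same level, adjacent) are two lens-shaped regions whose facing boundaries are arcs of unit-radius circles: at height $y$ above the depot line their gap is about $2(1-\lambda)\epsilon+y^2$, while points there are at distance about $y$ from their depots, i.e.\ the detour is about the \emph{square root} of the crossing cost, not a constant multiple of it. Now place two interleaved chains of requests along these facing boundaries, with along-chain spacing roughly twice the local gap; then serving the points column-wise (oscillating between the two regions) is cheaper than serving them chain-by-chain, so the optimal tour oscillates, each arc contains $O(1)$ requests, and at height scale $y$ the tour pays $\Theta(y^2)$ per oscillation while your surgery pays detours of $\Theta(y)$. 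Summing over dyadic scales $y=2^{-1},\dots,2^{-K}$ (with $\epsilon\approx 4^{-K}$), the total of your per-arc detours is $\Theta(K)$ while $OPT(I)=O(1)$; note that $LEV(I)$ itself is still $O(1)$ here, since each level-$l$ server pays for one excursion per region rather than one per arc, so the lemma is true but your charging scheme cannot yield any constant $\rho$. This is why the paper first converts the tours into \emph{non-oscillating} ones, merging all arcs of a tour lying in one region before paying any connection cost, and only then charges detours via Inequalities~(\ref{inequality:middle_point})--(\ref{inequality:right_point}) of Lemma~\ref{lemma:non-oscillating-to-alg}, which require three pairwise-distinct regions and therefore genuinely need non-oscillation. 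Your proof needs this intermediate step (or some substitute that merges same-region arcs first); without it, the final inequality $LEV(I)\le\rho\,OPT(I)$ does not follow.
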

Lemma~\ref{lemma:each_level} is sufficient for proving Theorem~\ref{thm:competitive-level}.
\begin{proof}[Proof of Theorem~\ref{thm:competitive-level}]
$$LEV(I) =\sum_{l=0}^{k+1}LEV(I\cap L_l)\leq \rho \sum_{l=0}^{k+1}OPT(I \cap L_l)\leq  \rho (k+2) OPT(I)= O(\log m) OPT(I)$$ where the last inequality is due to $ I \cap L_l \subseteq I$ for any $l=0,1,\dots, k+1$.
\qedhere
\end{proof}
In the rest of this section, we are going to describe the proof sketch for Lemma~\ref{lemma:each_level} (see Appendix~\ref{proof:lemma:each_level} for the complete proof). 
To prove Lemma~\ref{lemma:each_level}, we first find the relation between the cost of the optimal centralized algorithm that uses all servers and that of a centralized algorithm that uses only servers with indexes in $\mathbb{N}_l$. We call such algorithms \emph{responsible} algorithms because servers in level $l$ are responsible for all requests located in $L_l$. In particular, we have the following lemma (proved in Appendix~\ref{proof:lemma:opt-to-type1}).
\begin{lemma}\label{lemma:opt-to-type1} For any integer $l$, when $I\subseteq L_l$, there exists a responsible algorithm $RES$ whose cost is at most $1+1/g$ times of the cost of the optimal solution $OPT(I)$ for any integer $l=0,1,\dots, k+1$ where $g$ is defined to be $1/30$ for simplicity.
\end{lemma}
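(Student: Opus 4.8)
The plan is to exhibit a concrete responsible algorithm $RES$ by \emph{re-anchoring} the optimal tours at level-$l$ depots. Fix $l$ and let $\{T_j\}$ be the closed tours of an optimal centralized solution, so that $OPT(I)=\sum_j |T_j|$, where $|T_j|$ denotes the length of $T_j$ and $R_j$ the set of requests it serves. Since $I\subseteq L_l$, every request lies in $M^{LEV}_{i}\subseteq\tau_{i}$ for a unique level-$l$ server $i=i(p)\in\mathbb{N}_l$. For each $j$ with $R_j\neq\emptyset$ I would pick any $p_j\in R_j$, set $i_j=i(p_j)$, and assign the \emph{entire} set $R_j$ to the level-$l$ server $i_j$, which services it by travelling $x_{i_j}\to x_j$, traversing $T_j$, and returning $x_j\to x_{i_j}$ (several optimal servers may map to the same $i_j$, in which case it simply concatenates the resulting closed walks). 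This uses only level-$l$ servers, so it is a responsible algorithm, of cost at most $\sum_j\bigl(|T_j|+2\,d(x_{i_j},x_j)\bigr)$.

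The heart of the argument is the single geometric estimate
\[
 d\bigl(x_{i(p)},p\bigr)\le C\,d(x_j,p)\qquad\text{for every }p\in L_l\text{ and every depot }x_j,
\]
for an absolute constant $C$ (which the choices $\lambda=3/4$ and $g=1/30$ are tuned to accommodate). Granting it, and using that $T_j$ is a closed tour through $x_j$ and $p_j$ (so $d(x_j,p_j)\le |T_j|/2$), the triangle inequality gives $d(x_{i_j},x_j)\le d(x_{i_j},p_j)+d(p_j,x_j)\le (C+1)\,d(x_j,p_j)\le (C+1)|T_j|/2$. Summing over $j$ bounds the responsible cost by $(C+2)\,OPT(I)$; the partition is engineered so that the estimate holds with $C\le 29$, which yields the claimed factor $1+1/g=31$.

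It remains to establish the geometric estimate, and this is the step I expect to be the main obstacle. The upper bound is the easy half: membership $p\in\tau_{i(p)}$ bounds $d(x_{i(p)},p)$ by the lens radius, i.e.\ by a fixed multiple (determined by $\lambda$) of the distances $d(x_{i\pm 2^l},x_i)$ to the two bracketing higher-level depots, and the line condition $d(x_a,x_b)+d(x_b,x_c)=d(x_a,x_c)$ for $a<b<c$ lets one relate these to distances along the depot line cleanly. The difficulty is the lower bound on $d(x_j,p)$ when $OPT$ serves $p$ from a \emph{finer} depot $x_j$ (of level $l'<l$, lying between $x_i$ and a bracket), which a priori could sit arbitrarily close to $p$ and make re-anchoring arbitrarily expensive. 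What rules this out is precisely the exclusion built into the Level partition: $p\in M^{LEV}_{i}$ forces $p\notin\tau_{i'}$ for every lower-level $i'$, and $p\notin\tau_{i'}$ translates (via the two defining disks of $\tau_{i'}$ and the triangle inequality) into a lower bound of the form $d(x_{i'},p)>\lambda\,d(x_{i'\pm 2^{l'}},x_{i'})$.

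The crux is to reconcile these per-level bounds across \emph{all} finer levels $l'<l$ simultaneously: one must show that being outside every finer lens keeps $p$ far enough from every finer depot that $x_{i(p)}$ stays within a constant factor of the depot nearest to $p$. I expect this to require a nested case analysis along the bisection hierarchy of depots, tracking how the bracket distances $d(x_{i'\pm 2^{l'}},x_{i'})$ compare to $d(x_{i(p)},p)$ as $l'$ ranges below $l$; it is this necessarily lossy analysis that accounts for the gap between the naive constant $\lambda/(1-\lambda)=3$ and the actual $C\le 29$. The special levels $l=k,k+1$ (single servers $x_{2^k}$ and $x_0$, with $\tau_0=\mathbb{M}$) fit the same scheme, the exclusion from all remaining lenses again keeping the lone responsible depot within a constant factor of the optimum.
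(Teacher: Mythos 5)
Your re-anchoring construction is sound and is essentially a cleaner, per-tour version of the paper's own reduction: the paper assigns each nonempty optimal set $S^{OPT}_{i'}$ to the minimum-index level-$l$ server whose region it meets and routes that server once along the depot line through all depots it inherits, charging the sweep $2\,d(x_{i_{\min}},x_{i_{\max}})$ against the extreme optimal tours by the same kind of inequality you isolate. Granting your estimate with constant $C$, the bookkeeping $RES(I)\le (C+2)\,OPT(I)$ is correct, and your estimate and the paper's claim $d(p,x_{i_{\max}})\ge g\,d(x_i,x_{i_{\max}})$ for $p\in M^{LEV}_i$ are interconvertible by the triangle inequality ($C = 1+1/g$ one way, $g = 1/(C+1)$ the other).

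The genuine gap is that the geometric estimate is asserted, not proved, and it is not a technical side condition: it \emph{is} the lemma. Nearly all of the paper's proof (Appendix~\ref{proof:lemma:opt-to-type1}) goes into establishing exactly this claim, and the structure there shows why the sketch cannot be waved through: one needs a second threshold $f=7/8$ in addition to $\lambda=3/4$, a split on whether $i_{\max}\ge i+2^l$, then on whether $d(x_i,x_{i_{\max}})>f\,d(x_i,x_{i+2^l})$, and finally, at the dyadic scale $t$ bracketing $i_{\max}$, a three-way case analysis over \emph{which} of the two defining disk constraints of each finer lens $\tau_{i+2^t}$ is violated --- including a case where violated constraints at two adjacent scales must be combined to get any bound at all. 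Your proposal stops exactly where this work begins (``I expect this to require a nested case analysis''), so what you have is a reduction of the lemma to an unproven claim. Note also that exclusion from a finer lens only yields $d(x_{i'},p)>\lambda\, d(x_{i'},x_{i'\pm 2^{l'}})$ for \emph{whichever} bracket corresponds to the violated disk, not for both, so the per-level bounds are one-sided, and composing them across scales is lossy: the binding sub-case of the paper's analysis gives precisely $g\le\frac{(1-f)(2\lambda-1)}{2f}=\frac{1}{28}$, i.e.\ $C=29$ after conversion. Thus the constant $C\le 29$ that your accounting requires is exactly the sharpest one the full case analysis can deliver; there is no slack with which to bypass that analysis, and asserting the estimate with this constant amounts to assuming the hardest part of the result.
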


Once we have Lemma~\ref{lemma:opt-to-type1}, what is remaining is to find a relation between the cost of the given responsible algorithm $(RES(I))$, and the cost induced by the Level partition $(DIS^{LEV}(I))$. 
For each server $i\in \mathbb{N}_l$, we consider the sequence of the regions $\tau_j$ with $j\in \mathbb{N}_l$ visited by server $i$.
Note that for server $i$, the sequence begins and ends at $\tau_i$, and if the sequence only consists of one region $( \tau_i )$ for all $i$, then the responsible algorithm is the same as the optimal distributed algorithm induced by the Level partition. 
Therefore, we encounter an issue only when the sequence contains multiple regions for some server $i$.
We say that a responsible algorithm is \emph{non-oscillating} if a zigzag with at least four regions, i.e, pattern $(\tau_j, \tau_{j'}, \tau_j , \tau_{j'} )$, does not occur in the sequence of the route of any server $i$. 
In a non-oscillating responsible algorithm, it is allowed for each server $i$ not to travel through the optimal TSP tour given the set of requests that are assigned to it.
Given a responsible algorithm $RES$, we create a non-oscillating responsible algorithm $NOS$ as an intermediate step for comparing $RES(I)$ with $DIS^{LEV}(I)$, as shown in the following two lemmas (proved in Appendix~\ref{proof:lemma:type1-to-non-oscillating} and~\ref{proof:lemma:non-oscillating-to-alg} respectively).
\begin{lemma}\label{lemma:type1-to-non-oscillating}
Given a responsible algorithm $RES$, there exists a non-oscillating responsible algorithm $NOS$ whose cost is at most four times of the cost of $RES$ when $I\subseteq L_l$ for any integer $l=0,1,\dots, k+1$. \end{lemma}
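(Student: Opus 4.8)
The plan is to transform $RES$ into $NOS$ one server at a time, keeping the assignment of requests to servers unchanged and only reordering each server's tour. Since a non-oscillating algorithm is explicitly permitted to deviate from the per-server optimal TSP tour, it suffices to prove the following local statement: for each server $i\in\mathbb{N}_l$, given the closed walk $W_i$ that $RES$ uses to visit its assigned request set $S_i\subseteq L_l$ (each request carrying the label of the region $\tau_j$, $j\in\mathbb{N}_l$, that contains it), there is a closed walk $W_i'$ based at $x_i$ that visits exactly the same requests, whose induced region sequence contains no pattern $(\tau_j,\tau_{j'},\tau_j,\tau_{j'})$, and whose length satisfies $\mathrm{len}(W_i')\le 4\,\mathrm{len}(W_i)$. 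Summing this inequality over all $i\in\mathbb{N}_l$ yields $NOS(I)\le 4\,RES(I)$, which is the assertion of the lemma; the responsibility property is automatically preserved because $W_i'$ visits the same requests as $W_i$.

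For the construction of $W_i'$ I would exploit that the depots are collinear, so the regions $\{\tau_j\}_{j\in\mathbb{N}_l}$ inherit the linear order of their indices. Writing $\tau_i$ for the home region (note $x_i\in\tau_i$, since substituting $p=x_i$ satisfies both defining inequalities), I would reorganize $W_i$ into two monotone sweeps about $\tau_i$: a \emph{left} sweep that visits, in decreasing order of index, every region of index $<i$ in which $S_i$ has a request, collecting all of that region's requests during a single contiguous visit and then returning to $x_i$; followed by a symmetric \emph{right} sweep over the regions of index $>i$. A there-and-back sweep on each side makes every unordered pair of regions alternate at most in the benign pattern $\tau_a,\tau_b,\tau_a$, while regions on opposite sides of $\tau_i$ are fully separated in time; hence $W_i'$ avoids every $(\tau_j,\tau_{j'},\tau_j,\tau_{j'})$ (equivalently, its reduced region sequence is a Davenport--Schinzel sequence of order $2$), so $NOS$ is non-oscillating by construction.

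The crux, and the step I expect to be the main obstacle, is the length bound $\mathrm{len}(W_i')\le 4\,\mathrm{len}(W_i)$. I would split each walk's length into inter-region travel and intra-region travel. The intra-region part is essentially inherited from $W_i$, since $W_i'$ collects each region's requests in a single pass while $W_i$ must visit the same points. The inter-region part is where the constant must be paid: on a line, a there-and-back sweep reaching the same extreme regions is within a factor $2$ of any closed walk reaching them, and a further factor $2$ would come from ``doubling'' in order to consolidate the several visits that $W_i$ may make to one region into a single contiguous visit. The difficulty is that only the depots are collinear, whereas the requests live in the general metric space $(\mathbb{M},d)$, so the distance between a request of $\tau_a$ and a request of $\tau_b$ is not simply the gap between the two depot positions. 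I would control these distances through the ball-intersection definition of the regions: for $p\in\tau_j$ with $j\in\mathbb{N}_l$, the triangle inequality together with the two defining constraints bounds $d(p,x_j)$ in terms of the inter-depot distances $d(x_{j-2^l},x_j)$ and $d(x_{j+2^l},x_j)$ (this is where the constant $\lambda=3/4$ enters), so that each request sits in a controlled neighbourhood of its depot. Combining this localization with the one-dimensional sweep estimate and amortizing the consolidation of repeated visits is what I expect to demand the most care in order to land exactly at the factor $4$ rather than a larger constant.
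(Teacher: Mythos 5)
Your construction is genuinely different from the paper's, and the difference is precisely where your argument is incomplete. The paper never consolidates all visits to a region into one contiguous pass and never imposes a monotone left/right sweep. It only identifies each \emph{maximal contiguous} alternating block $\tau_a,\tau_b,\tau_a,\tau_b,\dots$ in a server's region sequence (this contiguous notion is all that the downstream Lemma~\ref{lemma:non-oscillating-to-alg} needs), reorders that block into ``all $\tau_a$-visits in their original order, then all $\tau_b$-visits,'' reuses every within-visit subroute of $RES$ verbatim, and bounds each of the three groups of new connecting paths, via the triangle inequality, by the length of the portion of the original route spanning that block. Since distinct maximal blocks occupy disjoint portions of the route, the total surcharge is at most $3\,RES(I)$, giving the factor $4$. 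Notably, no geometry is used: the collinearity of the depots, the constant $\lambda$, and the ball-intersection structure of the $\tau_j$ play no role in this lemma, whereas your argument leans on all of them.

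The gap in your proposal is the length bound $\mathrm{len}(W_i')\le 4\,\mathrm{len}(W_i)$, which you explicitly defer (``the crux\dots I expect to demand the most care''). The step that fails is the global consolidation of all passes through a region into a single contiguous visit. The natural charge --- bound the connector between the end of one pass through $\tau_a$ and the start of the next by the length of the original walk between them --- double-counts across regions: for a route whose region sequence is $\tau_{c_1}\tau_{c_2}\cdots\tau_{c_r}\tau_{c_1}\cdots\tau_{c_r}$ (which contains no contiguous alternation for $r\ge 3$, so the paper's repair leaves it essentially untouched), every one of the $r$ connectors gets charged to nearly the entire walk, yielding a factor $\Omega(r)$ rather than $4$. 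Indeed, for general clusters in a metric space the ratio between the best ``cluster-contiguous'' tour and the best unrestricted tour can grow with the number of clusters, so any fix must invoke the separation of the regions $\tau_j$ around their depots --- estimates of the kind the paper only develops later, in the proof of Lemma~\ref{lemma:non-oscillating-to-alg}, and which in any case would not land on the constant $4$. As written, the proposal therefore does not establish the lemma; the missing idea is that it suffices to repair contiguous alternations locally, so that the surcharges can be charged to pairwise disjoint segments of $RES$'s route.
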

\begin{lemma}\label{lemma:non-oscillating-to-alg}
The cost $DIS^{LEV}(I)$ of the distributed algorithm based on the Level partition is at most $\frac{25-5\lambda -6 \lambda^2}{1-\lambda}$ times of the cost of any given non-oscillating responsible algorithm $NOS$ when $I\subseteq L_l$ for any integer $l=0,1,\dots, k+1$. \end{lemma}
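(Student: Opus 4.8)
The plan is to derive the inequality $DIS^{LEV}(I)\le \frac{25-5\lambda-6\lambda^2}{1-\lambda}\,NOS(I)$ by a rerouting-and-charging argument. Since $I\subseteq L_l$, only the level-$l$ servers carry any requests, so both $DIS^{LEV}(I)=\sum_{i\in\mathbb{N}_l}TSP_i(S_i^{LEV})$ and $NOS(I)=\sum_{i\in\mathbb{N}_l}NOS_i$ range over $\mathbb{N}_l$. For each $i\in\mathbb{N}_l$ I would build an explicit feasible tour $T_i$ that starts and ends at $x_i$ and visits every request of $S_i^{LEV}=M_i^{LEV}\cap I$; by optimality of the travelling-salesman tour, $TSP_i(S_i^{LEV})\le\lvert T_i\rvert$, so it suffices to assemble the $T_i$ out of the routes of $NOS$ in such a way that $\sum_{i\in\mathbb{N}_l}\lvert T_i\rvert$ is at most the stated constant times $NOS(I)$. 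Every unit of length placed into some $T_i$ must therefore be charged against a comparable, and not-too-reused, unit of length already travelled by $NOS$.

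First I would cut the $NOS$ routes into pieces by region. Walking along the closed route of a fixed server $j\in\mathbb{N}_l$ produces the sequence of regions $\tau_{j'}$ ($j'\in\mathbb{N}_l$) appearing in the definition of a non-oscillating algorithm. For each target region $\tau_i$ I collect the maximal contiguous sub-paths of $j$'s route along which it serves requests of $M_i^{LEV}$; pooling these over all servers $j$ yields a collection of sub-paths that together visit every request of $M_i^{LEV}\cap I$. Concatenating the sub-paths belonging to region $i$ and splicing in short connections to the depot $x_i$ at the two ends of each sub-path produces the feasible tour $T_i$. The interior (in-region) lengths of the sub-paths are portions of the $NOS$ routes that are pairwise disjoint and each lie in a single region, so summing over all $i$ they contribute at most $NOS(I)$; the entire burden of the argument is thus shifted to the splices.

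The splices are controlled by the geometry of the two balls defining $\tau_i$. Writing $a=d(x_{i-2^l},x_i)$ and $b=d(x_{i+2^l},x_i)$, the two defining inequalities force every $q\in\tau_i$ to satisfy $d(q,x_{i+2^l})\ge(1-\lambda)b$ and $d(q,x_{i-2^l})\ge(1-\lambda)a$, so consecutive level-$l$ regions are separated by a gap of width at least $(1-\lambda)\min\{a,b\}$ around each intervening higher-level depot, a gap that contains no point of $L_l$ and hence no request. Each splice costs $d(x_i,q_{\mathrm{in}})+d(x_i,q_{\mathrm{out}})$ for endpoints $q\in\tau_i$, and I would bound each such term by a constant multiple (depending only on $\lambda$) of the distance $d(x_j,q)$ from the $NOS$ server $x_j$ that actually served the sub-path; combining the two ball constraints with the fact that the foreign level-$l$ depots lie beyond $x_{i\pm2^l}$ gives $d(x_i,q)\le\frac{\lambda}{1-\lambda}\,d(x_j,q)$ in the collinear case and an analogous constant bound in general. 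Charging each splice against the mandatory crossing of the empty $(1-\lambda)$-gap that $NOS$ must traverse to serve the sub-path, and then summing the in-region and splice contributions, is what produces the final constant $\frac{25-5\lambda-6\lambda^2}{1-\lambda}$, the factor $\frac{1}{1-\lambda}$ being exactly the amortization of the splice detours against that crossing.

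The hard part will be the splice bookkeeping, in two respects. First, the estimate $d(x_i,q)\le C(\lambda)\,d(x_j,q)$ is clean on the line but, because requests may lie anywhere in $\mathbb{M}$ while only the depots are collinear, its proof for off-line requests requires a case analysis that plays the two ball constraints against the locations of the foreign depots $x_{i\pm2^{l+1}}$ (a request deep in $\tau_i$ is automatically far from the nearest foreign depot, but verifying this uniformly is where the real work lies). Second, I must make the charging injective: each splice has to be paid for by a distinct and sufficiently long piece of $NOS$ travel, and this is precisely where the non-oscillating hypothesis of Lemma~\ref{lemma:type1-to-non-oscillating} is indispensable. Without it a server could dart across the boundary of $\tau_i$ arbitrarily often at negligible cost, forcing many splices against almost no $NOS$ travel and destroying the constant; the forbidden pattern $(\tau_j,\tau_{j'},\tau_j,\tau_{j'})$ caps the number of times a server re-enters $\tau_i$ through the same neighbour, so that after the two immediate neighbours are used up any further re-entry must cross to a farther region and therefore pays for itself. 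Verifying that these two ingredients combine into a bound that is a genuine constant, independent of $m$ and $n$, is the crux.
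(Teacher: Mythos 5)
Your overall architecture coincides with the paper's: cut the $NOS$ routes at region boundaries, keep the in-region pieces, splice the two endpoints of each piece to the depot of the region that owns it, bound $DIS^{LEV}$ by the resulting feasible tours via TSP-optimality, and pay for the splices by charging them against $NOS$ travel (the paper organizes the surgery per $NOS$ server and uses subadditivity of $DIS^{LEV}$ over request sets, you organize it per target region; that difference is immaterial). The gap is in the only step carrying the quantitative content: both of your proposed charging rules fail. First, bounding a splice by $C(\lambda)\,d(x_j,q)$, where $x_j$ is the depot of the $NOS$ server that served the sub-path, cannot be summed: a closed route of length $L$ only guarantees $2\max_v d(x_j,a_v)\leq L$, not $\sum_v d(x_j,a_v)=O(L)$, and a non-oscillating route may contain arbitrarily many visits (the forbidden pattern $(\tau_a,\tau_b,\tau_a,\tau_b)$ does not limit visits to pairwise distinct or alternately distinct regions), e.g.\ a far-away server bouncing among a cluster of small, nearby regions; this charge then overcounts by an unbounded factor. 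Second, charging each splice against that visit's own entry/exit legs (your ``mandatory gap crossing'') also fails, because the splice need not be comparable to those legs. Concretely, let $a=d(x_{i-2^l},x_i)\gg b=d(x_i,x_{i+2^l})=c=d(x_{i+2^l},x_{i+2^{l+1}})$ and $d(x_{i+2^{l+1}},x_{i+3\cdot 2^l})\approx a$; one can build a metric with collinear depots containing a request $q\in\tau_i$ and a request $p\in\tau_{i+2^{l+1}}$ with $d(p,q)=2(1-\lambda)b$ yet $d(x_i,q)\approx\lambda a$ (all ball constraints are consistent). A middle-of-route visit with pattern $(\dots,\tau_{i+2^{l+1}},\tau_i,\tau_{i+2^{l+1}},\dots)$ realized by $\dots\rightarrow p\rightarrow q\rightarrow p\rightarrow\dots$ is non-oscillating, has both adjacent legs of length $2(1-\lambda)b$, but requires a splice of order $\lambda a$; no constant depending only on $\lambda$ covers this.

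This enter-and-return pattern is precisely the case the paper's proof is built around, and it is what your plan misses. The paper proves two geometric facts: for points $p_1,p_2,p_3$ in three regions $\tau_{t_1},\tau_{t_2},\tau_{t_3}$ with $t_1<t_2<t_3$, the splice at the middle region satisfies $d(x_{t_2},p_2)\leq\frac{2+\lambda}{2}d(p_1,p_3)$, and the splice at the extreme region satisfies $d(x_{t_3},p_3)\leq\frac{4-\lambda-\lambda^2}{2-2\lambda}\bigl(d(p_1,p_3)+d(p_2,p_3)\bigr)$. It then bounds the splice at visit $j$ by a constant times a \emph{window of up to three consecutive inter-visit legs} $d(a_{j-1},a_j)+d(a_j,a_{j+1})+d(a_{j+1},a_{j+2})$, each leg being reused $O(1)$ times across all splices. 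The role of non-oscillation is not to cap re-entry counts, as you suggest, but exactly to rescue the case $i_{j-1}=i_{j+1}$: it forces $i_{j+2}\neq i_j$, and it is the leg $d(a_{j+1},a_{j+2})$ --- a leg not adjacent to the $\tau_{i_j}$ visit at all --- that pays for the splice, via a case split on which side of $i_j$ the region $\tau_{i_{j+2}}$ lies. In my example above, this is the long leg from $p$ back toward the server's home region, of length $\Theta(a)$, which indeed dominates the splice $\lambda a$ up to the stated constant. Repairing your proof amounts to proving these two inequalities and widening your charging window from a visit's own legs to this three-leg window; without that, the constant you are after cannot be extracted.
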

\subsection{Bounded Ratios between Distances of Depots}\label{sec:bounded}
Here we consider the case where the ratios between the distances of depots are bounded above by a sublinear function $f$ of $m$, i.e., $$\frac{\max_{i,j}\{d(x_i,x_j)\}}{\min_{i,j}\{d(x_i,x_j)\}} \leq f(m).$$ 
We give a partition scheme (Local partition $LOC$) with an approximation ratio of $\Theta(f(m))$ (and hence for it to be useful, we assume $f(m)=o(m)$).

\textbf{The Local Partition.}
Fix an arbitrary ordering of the servers, $i=1,2,\dots , m$.
The region $M^{LOC}_i$ that each server $i$ is responsible for centers around, except for server $m$, a local region around its depot:
$$M^{LOC}_i \triangleq \left\{p\in \mathbb{M} \mid d(p,x_i) < \frac{\min_{j',j}\{d(x_{j'},x_j)\}}{4} \right\}, 1 \le i \le m-1,$$
$$ M^{LOC}_m \triangleq \mathbb{M}\setminus \bigcup_{i=1}^{m-1}M^{LOC}_i.$$
\begin{theorem}\label{thm:local_partition}
	The Local partition has an approximation ratio of $\Theta(f(m))$. 
\end{theorem}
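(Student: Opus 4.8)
The plan is to prove a matching upper bound $DIS^{LOC}(I)\le O(f(m))\,OPT(I)$, valid for every admissible depot configuration and every instance $I$, together with a lower bound exhibiting one configuration and instance for which $DIS^{LOC}(I)\ge \Omega(f(m))\,OPT(I)$; the two directions give the claimed $\Theta(f(m))$. Throughout I set $d_{\min}\triangleq \min_{i\neq j}d(x_i,x_j)$ and $d_{\max}\triangleq \max_{i\neq j}d(x_i,x_j)$, so that $d_{\max}\le f(m)\,d_{\min}$. The first structural observations I would record are that the regions $M^{LOC}_i=\{p:d(p,x_i)<d_{\min}/4\}$ for $i\le m-1$ are pairwise disjoint, and that every point of $M^{LOC}_i$ lies at distance $\ge 3d_{\min}/4$ from every other depot; hence on $M^{LOC}_i$ the nearest depot is $x_i$, and the balls are separated from one another by at least $d_{\min}/2$.

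For the upper bound I would use two lower bounds on $OPT(I)$. First, merging all depots into a single node $s$ with $d(s,p)\triangleq \min_i d(x_i,p)$ turns any offline solution into a connected structure on $\{s\}\cup I$ of the same total length, so $OPT(I)\ge \mathrm{MST}^\ast$, the minimum spanning tree of $\{s\}\cup I$; second, $OPT(I)\ge 2\min_i d(x_i,\ell_j)$ for each request. I would also use the standard fact $TSP_i(S^{LOC}_i)\le 2\,\mathrm{MST}(\{x_i\}\cup S^{LOC}_i)$. For the local servers $i\le m-1$, since $d(x_i,p)=d(s,p)$ on $M^{LOC}_i$, the tree $\mathrm{MST}(\{x_i\}\cup S^{LOC}_i)$ splits into one link of length $a_i\triangleq \min_{p\in S^{LOC}_i}d(x_i,p)<d_{\min}/4$ plus the internal spanning cost of $S^{LOC}_i$. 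Because the balls are separated, in $\mathrm{MST}^\ast$ each nonempty cluster contributes a distinct exit edge of length $\ge a_i$ and disjoint internal edges, giving $\sum_{i\le m-1}\mathrm{MST}(\{x_i\}\cup S^{LOC}_i)=O(\mathrm{MST}^\ast)=O(OPT(I))$; so the local servers together cost only $O(OPT(I))$.

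The main obstacle is server $m$, whose single tour may have to collect requests scattered over the whole configuration. The naive idea of detouring from $x_m$ to each of OPT's subtours costs $\Theta(m\,d_{\max})$, which is hopeless when $OPT(I)$ is small. Instead I would build a spanning tree of $\{x_m\}\cup S^{LOC}_m$ from $\mathrm{MST}^\ast$: take the subtree joining $s$ to $S^{LOC}_m$ and reconnect it to $x_m$, paying an extra $\le d_{\max}$ only at those attachment points whose nearest depot is not $x_m$. Every such attachment point lies in $S^{LOC}_m$, hence outside all balls, so its connecting edge has length $\ge d_{\min}/4$; these are distinct edges of a tree of total length $O(OPT(I))$, so there are at most $O(OPT(I)/d_{\min})$ of them and the total penalty is $O\big((d_{\max}/d_{\min})\,OPT(I)\big)=O(f(m)\,OPT(I))$. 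This yields $TSP_m(S^{LOC}_m)\le O(f(m))\,OPT(I)$, and combined with the local estimate, $DIS^{LOC}(I)\le O(f(m))\,OPT(I)$. The crux is exactly this charging step, which converts the potential factor $m$ into the ratio $d_{\max}/d_{\min}$.

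For the matching lower bound I would use a single request. Since $f(m)=o(m)$, the depots cannot be spread out along a line, so I would take a clustered metric: let $x_1,\dots,x_{m-1}$ have pairwise distance $1$ and place $x_m$ at distance $f(m)$ from each of them, which is a valid metric with $d_{\min}=1$ and $d_{\max}=f(m)$. Adding one request $p$ with $d(p,x_1)=d_{\min}/4$ (and $d(p,x_j)\ge 3/4$ for the remaining depots) forces $p\in M^{LOC}_m$, so that $DIS^{LOC}(I)=2\,d(x_m,p)\ge 2\big(f(m)-1/4\big)=\Theta(f(m))$, whereas assigning $p$ to server $1$ gives $OPT(I)=2\cdot d_{\min}/4=1/2$. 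The resulting ratio is $\Theta(f(m))$, which together with the upper bound establishes the $\Theta(f(m))$ approximation ratio.
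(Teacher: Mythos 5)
Your lower-bound construction is correct and plays the same role as the paper's (a single request placed just outside one of the balls, so that server $m$ must fetch it from far away); it is even slightly cleaner than the paper's example since it respects the ratio bound exactly and works for all $m$. The problem is in the upper bound: both of your charging steps rest on a false premise about where $\mathrm{MST}^{\ast}$ attaches to things, and the premise fails precisely in the boundary situation that the Local partition creates, namely that a request just inside a ball can be arbitrarily close to a request just outside it.

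For the local servers, the claim that ``each nonempty cluster contributes a distinct exit edge of length $\ge a_i$'' is false: a request $p\in S^{LOC}_i$ with $d(x_i,p)=d_{\min}/4-\varepsilon$ can be joined in $\mathrm{MST}^{\ast}$ to a request $q\in S^{LOC}_m$ with $d(x_i,q)=d_{\min}/4$ by an edge of length $O(\varepsilon)$, while $a_i$ may be as large as $d_{\min}/4-\varepsilon$; moreover the edges of $\mathrm{MST}^{\ast}$ internal to $S^{LOC}_i$ form only a forest, whose components may be linked to one another solely through such cheap excursions into $S^{LOC}_m$, so ``internal edges plus one exit edge'' need not span $\{x_i\}\cup S^{LOC}_i$ at all. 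The inequality you want, $\sum_{i\le m-1}TSP_i(S^{LOC}_i)=O(OPT(I))$, is true, but it needs a genuinely different argument: the paper proves the stronger statement that when all requests lie in the balls, the optimum never mixes regions (an exchange argument: any tour edge between different regions has length $>d_{\min}/2$ and can be replaced by two edges to the local depots, each of length $<d_{\min}/4$, strictly decreasing the cost); alternatively one can show that in the MST of $\{s\}\cup\bigcup_{i<m}S^{LOC}_i$ no edge joins two different balls (cycle property through $s$), so that MST decomposes cluster by cluster, and then compare it to $\mathrm{MST}^{\ast}$ by pruning and shortcutting. Your charging as written does not establish it. For server $m$, the claim ``every such attachment point lies in $S^{LOC}_m$'' is likewise false: the minimal subtree of $\mathrm{MST}^{\ast}$ joining $s$ to $S^{LOC}_m$ can attach to $s$ through vertices of $\bigcup_{i<m}S^{LOC}_i$ acting as Steiner points (e.g.\ a path $s\,\text{--}\,p\,\text{--}\,q$ with $p\in S^{LOC}_1$, $q\in S^{LOC}_m$), and the corresponding edges $(s,p)$ can be arbitrarily short, so you cannot bound their number by their own lengths. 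This step is repairable with a small fix: every branch of that subtree hanging off $s$ must contain a terminal of $S^{LOC}_m$ (minimality), hence has total length $\ge d_{\min}/4$, and the branches are edge-disjoint, which bounds the number of attachment points by $4\,OPT(I)/d_{\min}$ and recovers your $O(f(m))\,OPT(I)$ penalty. So the server-$m$ half survives with a corrected counting argument, but the local-server half has a genuine gap that requires a different idea, such as the paper's exchange argument.
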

See Appendix~\ref{proof:thm:local_partition} for the proof.


\section{Open Problems}\label{sec:open}
Our formulations open up several challenging problems, which we discuss here. We believe that seeking the answers to any of the following open problems, either positive or negative, would be of great theoretical interest and practical use. 

\begin{open problem}\label{open-problem:line}
Does there exist a partition scheme with an approximation ratio of $o(\log (m))$ when the depots form a line?
\end{open problem}
This open problem is immediate: either by a different partition scheme or tighter analysis of our $O(\log (m))$-approximated partition scheme for the line case.

\begin{open problem}\label{open-problem:2d}
Does there exist a partition scheme with an approximation ratio of $o(m)$ when the metric space is the two-dimensional Euclidean space $\mathbb{R}^2$? What about $\mathbb{R}^3$?
\end{open problem}
Since difficulties can potentially arise when one goes from $1D$ to $2D$ and from $2D$ to $3D$ (where going from $3D$ to higher dimensions is typically straightforward), answering this question can be very valuable.

\begin{open problem}\label{open-problem-configurations-O(1)}
For a fixed sublinear function $f$, under what configurations of depots, does there exist a partition scheme with an approximation ratio of $\Theta(f(m))$?
\end{open problem}
This problem is a natural generalization of our bounded-ratio result. In particular, the bounded-ratio configuration works for any sublinear $f$.

\begin{open problem}\label{open-problem-o(m)}
Does there exist a partition scheme with an approximation ratio of  $o(m)$ for any configuration of depots in any metric space?
\end{open problem}
Our conjecture here, as mentioned in Remark~\ref{depots}, is that such sublinear partition schemes exist.

\begin{open problem}\label{open-problem-O(1)}
Does there exist a partition scheme with an approximation ratio of $\Theta(1)$ for any configuration of depots in any metric space?
\end{open problem}

If the answer to Open Problem~\ref{open-problem-O(1)} is yes, then the resulting partition scheme will be a surprisingly good one that completely trivializes the use of communications.
\\\textbf{Variants of The Distributed Online Problem.} 
Due to the offline-online equivalence result (Theorem~\ref{thm:relations-distributed-algorithms}) we have focused exclusively on offline problems. However, there exist variants of the objective function for the online problem where such a reduction is not easily obtained. Below is an example.

Given an online algorithm $ALG$, consider the cost incurred by server $i$ as follows (compare it with Equation~\eqref{def:ALG_i}). $$ ALG_i \triangleq \inf_{t \geq \max_{(r_j,l_j) \in S_i}\{ c^{ALG}_j\} } \{t \mid l^{ALG}(i,t) = x_i \}.$$

For this problem, we are again interested in finding partition schemes for online algorithms that have sublinear competitive ratios. However, it is easy to show that the partition schemes discussed in Definition~\ref{definition:par} cannot provide such online algorithms (consider the problem instance consisting of $m$ requests with $(r_j,l_j)=(TSP(x_1,x_2,\dots, x_m),x_j)$ for all $j$). 
Therefore, alternative partition schemes that do not require communications between servers need to be taken into consideration, which opens interesting research directions. The time-dependent partition schemes (see Remark~\ref{time_dependent_partition}), where the assignment of a request depends on both the location and the release date of the request, are possible candidates for solving this variant of the distributed online problem.

\newpage
\bibliographystyle{mynat} 
\bibliography{bibdata}
\newpage
\appendix
\section{Appendix}
\subsection{Proof of Theorem~\ref{thm:relations-distributed-algorithms}}\label{proof:thm:relations-distributed-algorithms}
\begin{proof}
To simplify the notation, we denote $R^i$ the set of the locations of all requests in $M_i$ and $R^i_t$ the set of the locations of requests in $M_i$ with release dates at most $t$, i.e., 
$$ R^i \triangleq \{l_j:  (r_j,l_j)\in I, r_j \in M_i\}\text{ and }  R^i_t \triangleq \{l_j:  (r_j,l_j)\in I^i_t\}.$$
We consider the following distributed online algorithm $DOA$. 
	\begin{itemize}
		\item The online algorithm $DOA$ adopts the partition scheme $\textbf{par}$.
		\item At any time $t$, if a request is released in $M_i$, then server $i$ stops traveling through the planned route, returns to its depot $x_i$, and then follows the route $TSP_i(R^i_t)$.
	\end{itemize}
We first find an upper bound on the cost of the algorithm $DOA$. Let $j$ denote the index of the request with the maximum release date in $M_i$. 
At the release date $r_j$, the distance between the location of server $i$ and its depot $x_i$ is at most $r_j$, and thus the time when server $i$ returns at its depot and starts the route $TSP_i(R^i_{r_j})$ is at most $2r_j$, which is at most $2r_n$. Moreover, $R^i_{r_j}  = R^i$.
Therefore, $DOA_i \leq 2 r_n + TSP_i(R^i)$. 

Now let us find a lower bounds on the cost of the optimal centralized offline algorithm $OPT(I)$. 
First we notice that the completion time of request $n$ is lower bounded by its release date, i.e., $c^{OPT}_n \geq r_n$, and therefore $OPT_i \geq r_n$ for all $i=1,2,\dots , m$.
As a result, $OPT(I) \geq m r_n$. 
In addition, according to the presumption of this lemma, $\alpha OPT(I) \geq \sum_{i=1}^m TSP_i(R^i)$. 

Combine the results above, we have
\begin{align*}
	DOA (I)= \sum_{i=1}^m DOA_i \leq 2mr_n + \sum_{i=1}^m TSP_i(R^i) \leq 2OPT(I)+\alpha OPT(I).
\end{align*}
Hence, we conclude that the competitive ratio of the distributed online algorithm $DOA$ is at most $ \alpha +2$.
\end{proof}

\subsection{Proof of Theorem~\ref{thm:voronoi}}\label{proof:thm:voronoi}

\begin{proof}
\citeauthor{RefWorks:342}~\cite{RefWorks:342} give an example demonstrating that the Voronoi partition has an approximation ratio of at least $m$.
Therefore, it suffices to prove that the approximation ratio of the Voronoi partition is at most $m$.

We first consider the route of any server (without loss of generality and for convenience, say Server $1$) under the Voronoi partition, and show that the following holds:
\begin{align}\label{claim:server_1}
TSP_1(S^{VOR}_1) \leq OPT(I),
\end{align}
where $S^{VOR}_1$, as defined in Section~\ref{sec:offline_problem}, is the set of requests assigned to Server $1$ under the Voronoi partition. 

Since Server $1$ in Inequality~(\ref{claim:server_1}) can be replaced by any server, we have
	$$ DIS^{VOR}(I) = \sum_{i=1}^m TSP_i(S^{VOR}_i) \leq mOPT(I), $$
which proves the theorem.	
	
We now prove Inequality~(\ref{claim:server_1}) now. Without loss of generality, assume for some integer $k$, $S^{OPT}_i \cap S^{VOR}_1 \neq \emptyset$ for all $i=2,3,\dots, k$, and $S^{OPT}_i \cap S^{VOR}_1 = \emptyset$ for all $i=k+1,k+2,\dots , m$, where $S^{OPT}_i$, as defined in Section~\ref{sec:offline_problem}, is the set of requests assigned to Server $i$ under the optimal partition $OPT$. 

Therefore, for each $i \in \lbrace 2, 3, \dots, k \rbrace$, there exists at least one request $p$ in the set $S^{OPT}_i$ such that $d(x_1 , p) \leq d(x_i , p)$. 
For each $i$, among all such ``closer-to-deopt $x_1$" requests, we denote the first and last requests visited by $TSP_i(S^{OPT}_i)$ (the route of Server $i$ under $OPT$) by $a_i$ and $b_i$ respectively, where $a_i$ and $b_i$ coincide if there is only one such request.

We create as follows a route for Server $1$ that visits all requests in $S^{VOR}_1$ based on the routes of servers $1,2,\dots , k$ under the optimal algorithm $OPT$. 
First of all, let Server $1$ follow $TSP_1(S^{OPT}_1)$, the route of Server $1$ in the optimal algorithm. For each $i = 2,3,\dots , k$, let Server $1$ follow the additional round trip that begins and ends at the depot $x_1$.
		\begin{enumerate}
			\item Travel from $x_1$ to $a_i$.
			\item Travel from $a_i$ to $b_i$ using $TSP_i(S^{OPT}_i)$.
			\item Travel from $b_i$ to $x_1$.
		\end{enumerate}
	It is clear that by doing so, Server $1$ visits all requests in $S^{VOR}_1$. 
	Therefore, the length of the route of Server $1$ defined above is an upper bound of $TSP_1(S^{VOR}_1)$.
	For each $i \in \lbrace 2, 3, \dots, k \rbrace$, the length of the additional route because of requests in $S^{OPT}_i$ is at most $TSP_i(S^{OPT}_i)$ because $ d(x_1,a_i)\leq d (x_i,a_i)$ and $ d(x_1,b_i)\leq d (x_i,b_i)$. 
	Therefore, $$TSP_1(S^{VOR}_1) \leq \sum_{i=1}^k TSP_i (S^{OPT}_i) \leq OPT(I),$$
	which establishes Inequality~(\ref{claim:server_1}).
		\qedhere	
\end{proof}
\subsection{$\Omega(m)$-approximation Examples for The Voronoi Partition}\label{examples}
For the line case, we provide the following example. 
Let the metric space be the two dimensional Euclidean space and $x_i = (0,i)$ for $i=1,2,\dots , m$. 
Let $I$ to be the problem instance that consists of $m$ requests where $l_j = (k,j)$ for some constant $k$. Clearly, $DIS^{VOR}(I)=2km$ and $OPT(I) \leq 2k+2m$. When $k\rightarrow \infty$, the ratio $DIS^{VOR}(I)/ OPT(I)$ approaches $m$.

For the bounded-ratio case, we provide the following example with $f(m)=1$. 
Let the metric space be the $m-$dimensional Euclidean space and $x_i=e_i$ where $e_i$ the $m-$dimensional vector that has the value of $1$ in the $i^{\text{th}}$ dimension and $0$ in all other dimensions. Let $I$ to be the problem instance that consists of $m$ requests where $l_j=\epsilon e_j$ for some constant $\epsilon >0$ for each $j=1,2,\dots , m$. 
Clearly, $DIS^{VOR}(I)=2m(1-\epsilon)$. 
On the other hand, the cost of the optimal algorithm is at most that of the algorithm that assigns all requests to the same server, which is upper bounded by $2+2m\epsilon$. 
Therefore, the ratio $DIS^{VOR}(I)/OPT(I)$ approaches $m$ as $\epsilon \rightarrow 0^+$. 

\subsection{Proof of Lemma~\ref{lemma:each_level}}\label{proof:lemma:each_level}
\begin{proof}
With Lemmas~\ref{lemma:opt-to-type1},~\ref{lemma:type1-to-non-oscillating}, and~\ref{lemma:non-oscillating-to-alg}, we can prove Lemma~\ref{lemma:each_level} as follows.
\begin{align*}
LEV(I) & \leq 		\frac{25-5\lambda -6 \lambda^2}{1-\lambda} NOS(I) \leq 4 \times \frac{25-5\lambda -6 \lambda^2}{1-\lambda} RES(I) 
\\ & \leq \left(1+\frac{1}{g} \right)\frac{100-20\lambda -24 \lambda^2}{1-\lambda} OPT(I) \leq 9000 OPT(I)
\end{align*}	
where $\lambda = 3/4$ and $g=1/30$. Therefore, the lemma holds when $\rho$ is chosen to be, for example, $9000$.
	\qedhere
\end{proof}
\subsection{Proof of Lemma~\ref{lemma:opt-to-type1}}\label{proof:lemma:opt-to-type1}
\begin{proof}
To simplify the discussion, we prove only the cases in which $l=0,1,2,\dots , k-1$. However, it is clear that the proof can be modified and applied to the cases where $l=k, k+1$. 


We define the following responsible algorithm $RES$. 
For each $i'=0,1,2,\dots,2^k$, if $S^{OPT}_{i'}$ is not empty, i.e., there are requests served by server $i'$ in the optimal algorithm $OPT$, then we assign all requests in $S^{OPT}_{i'}$ to the server with the minimum index $i$ in $\mathbb{N}_l$ such that $S^{OPT}_{i'}$ contains a request in $M^{LEV}_i$. 
This is equivalent to the following definition.
$$ S^{RES}_i \triangleq \bigcup_{i' \in A_i}S^{OPT}_{i'}$$ where $$ A_i \triangleq \{ j: S^{OPT}_j \cap M^{LEV}_i \neq \emptyset \text{ and } S^{OPT}_{j} \cap M^{LEV}_{i'} =\emptyset \text{ for all }i' < i, i' \in \mathbb{N}_l \}. $$
Now we define the route of each server $i$ in $\mathbb{N}_l$ separately. 
Let us denote $i_{\max}$ and $i_{\min}$ the maximum and minimum index in $A_i$  respectively. The route of server $i$ travels from $x_i$ to $x_{i_{\max}}$, then to $x_{i_{\min}}$, and finally back to $x_i$ through the shortest path that passes through depots $x_{i'}$ for all $i_{\min}\leq i' \leq i_{\max}$. 
In addition, server $i$ travels through $TSP_{i'}(S^{OPT} _{i'})$ when passing by depot $x_{i'}$ for the first time before going to the next depot if $i'\in A_i$. Clearly, the cost of $RES$ can be calculated as follows.
$$RES(I)  = \sum_{i \in \mathbb{N}_l} \left(\sum_{i' \in A_i} TSP_{i'}(S^{OPT}_{i'}) +2d(x_{i_{\min}},x_{i_{\max}})\right)$$

We will prove the following claim.
\begin{align}
 d(p,x_{i_{\max}}) \geq g d(x_{i},x_{i_{\max}})\label{inequality:distance}
\end{align}
for any $p\in M^{LEV}_{i}$. 

Before proving Claim~(\ref{inequality:distance}), we first show that Claim~(\ref{inequality:distance}) implies the lemma. 

If Claim~(\ref{inequality:distance}) is true, then
\begin{align*}
	TSP_{i_{\max}} (S^{OPT}_{i_{\max}}) \geq 2 d(p,x_{i_{\max}}) \geq 2gd(x_i ,x_{i_{\max}})
\end{align*} for any point $p$ in  $S^{OPT}_{i_{\max}} \cap M^{LEV}_i.$ 
By symmetry, the same result holds for $i_{\min}$.
Therefore, $$ \frac{1}{g}\sum_{i' \in A_i} TSP_{i'}(S^{OPT}_{i'}) \geq \frac{1}{g} (TSP_{i_{\min}} (S^{OPT}_{i_{\min}}) + TSP_{i_{\max}} (S^{OPT}_{i_{\max}})) \geq 2 d(x_{i_{\min}},x_{i_{\max}})$$
when $i_{min}\neq i_{max}$. On the other hand, if $i_{min}=i_{max}$, then $i_{min}=i_{max}=i$. Hence, $$ \frac{1}{g}\sum_{i' \in A_i} TSP_{i'}(S^{OPT}_{i'}) \geq 0 = 2 d(x_{i_{\min}},x_{i_{\max}}).$$

As a result, given Claim~(\ref{inequality:distance}), the lemma can be proven as follows.
\begin{align*}
	RES(I)& = \sum_{i \in \mathbb{N}_l} \left(\sum_{i' \in A_i} TSP_{i'}(S^{OPT}_{i'}) +2d(x_{i_{\min}},x_{i_{\max}})\right)
	\\& \leq \left(1+\frac{1}{g}\right) \sum_{i \in \mathbb{N}_l} \left(\sum_{i' \in A_i} TSP_{i'}(S^{OPT}_{i'}) \right)
	\\& =  \left( 1+\frac{1}{g} \right) OPT(I).\end{align*}

Let us now prove Claim~(\ref{inequality:distance}).
To simplify the notation, let us denote $j=i+2^{l}$. 

First we consider case where $i_{\max}\geq j$. In this case,
\begin{align}
	d(p, x_{i_{\max}}) &\geq d(x_{i-2^{l}},x_{i_{\max}})-d(x_{i-2^{l}},p) \geq d(x_{i-2^{l}},x_{i_{\max}})- (d(x_{i-2^{l}},x_i)+\lambda d(x_i,x_{j})) \nonumber	\\ &=(1-\lambda)d(x_i,x_{j})+d(x_j, x_{i_{\max}}) \label{inequality:not_close_to_x_j1}
\\	 &\geq (1-\lambda)d(x_i,x_{i_{\max}}) \geq g d(x_i,x_{i_{\max}}).  \nonumber
\end{align}

Let us now consider the other case where $i<i_{\max}<j$ (this case is possible only if $l\geq1$). Inequality~(\ref{inequality:not_close_to_x_j1}) with $i_{\max}=j$ gives us
\begin{align}
	d(p, x_j)\geq (1-\lambda)d(x_i,x_{j}),\label{inequality:not_close_to_x_j} 
\end{align} which motivates us to distinguish cases further based on the ratio between $d(x_i,x_{i_{\max}})$ and $d(x_i,x_j)$. 
Define the threshold of the ratio to be $f\triangleq 7/8$ for simplicity. 
In fact, any fixed constant in $(\lambda, 1)$ will do (for possibly a different positive real number $g$).

If $d(x_i,x_{i_{\max}}) > fd(x_i,x_j)$, according to the triangle inequality and Inequality~(\ref{inequality:not_close_to_x_j}), 
	\begin{align*}
	d(p,x_{i_{\max}}) &	\geq d(p,x_j) - d(x_j,x_{i_{\max}}) >
(1-\lambda)d(x_i,x_j)-(1-f)d(x_i,x_j) 
\\& = (f- \lambda) d(x_i,x_j)>(f- \lambda) d(x_i,x_{i_{\max}}) \geq g d(x_i,x_{i_{\max}}).
	\end{align*}
	
What is remaining is the case where $1\leq i_{\max}\leq j-1$ and $d(x_i,x_{i_{\max}}) \leq fd(x_i,x_j)$. This case implies that $$ d(x_{i_{\max}}, x_j) \geq \frac{1-f}{f} d(x_i, x_{i_{\max}}),$$ which we use frequently in the remaining of proof. We distinguish two cases.
\begin{enumerate}
	\item \label{casea}$1\leq i_{\max} \leq i+2^{l-1}$.\\
	In this case, there exists a positive integer $t' \leq l-1$ such that $i+2^{t'-1}\leq i_{\max}\leq i+2^{t'}$. 
According to the definition of $M^{LEV}_i$, $p$ is not in $\tau_{i+2^t}$ for any $t=0,1,\dots , l-1$. Thus, for each $t=t'-1, t', \dots , l-1$, the point $p$ in $M^{LEV}_{i}$ must violate one of the following two constraints.
\begin{align}
 d(p,x_{i}) & \leq d(x_{i}, x_{i+2^t}) + \lambda d(x_{i+2^t},x_{i+2^{t+1}}). \label{constraint:1}
\\ d(p,x_{i+2^{t+1}}) & \leq d(x_{i+2^{t+1}}, x_{i+2^t}) + \lambda d(x_{i},x_{i+2^t}).\label{constraint:2}
\end{align}
We distinguish three cases.
\begin{enumerate}
	\item \label{caseai} Constraint~(\ref{constraint:2}) is not violated for $t= l-1.$ \\
	In this case, Constraint~(\ref{constraint:1}) is violated for $t=l-1$. As a result,
	\begin{align*}
		d(p,x_i) > d(x_i, x_{i+2^{l-1}}) + \lambda d(x_{i+2^{l-1}}, x_{i+2^{l}}). 
	\end{align*} Therefore, 
	\begin{align*}
		d(p,x_{i_{\max}}) &\geq d(p,x_i) - d(x_i, x_{i_{\max}}) 
		\\ & \geq d(x_i, x_{i+2^{l-1}}) + \lambda d(x_{i+2^{l-1}}, x_{i+2^{l}}) - d(x_i, x_{i_{\max}})
		\\ &  \geq d(x_{i_{\max}}, x_{i+2^{l-1}})+ \lambda d(x_{i+2^{l-1}}, x_{i+2^{l}})
		\\ & \geq \lambda d(x_{i_{\max}}, x_{j}) \geq \frac{  1-f }{f} \lambda d(x_i,x_{i_{\max}}) \geq g d(x_i, x_{{\max}}).
	\end{align*} 
	\item \label{caseaii}Constraint~(\ref{constraint:1}) is not violated for $t=t'-1$.\\
	In this case, Constraint~(\ref{constraint:2}) is violated for $t=t'-1$. Therefore, 
	\begin{align*}
		d(p,x_{i+2^{t'}}) > d(x_{i+2^{t'}}, x_{i+2^{t'-1}}) + \lambda d(x_{i},x_{i+2^{t'-1}}).
	\end{align*} As a result,
	\begin{align*}
d(p,x_{i_{\max}}) & \geq 	d(p,x_{i+2^{t'}}) -d (x_{i_{\max}},x_{i+2^{t'}})
\\ & > d(x_{i_{\max}}, x_{i+2^{t'-1}}) + \lambda d(x_{i},x_{i+2^{t'-1}})
\\ & > \lambda d(x_i, x_{i_{\max}}) \geq \frac{f}{1-f}g d(x_i, x_{{\max}})
	\end{align*}
	where $\frac{f}{1-f} > 1$. We keep the constant $\frac{f}{1-f}$ to simplify the proof for case~\ref{caseb}.
\item \label{caseaiii}Constraint~(\ref{constraint:2}) is violated for $t+1$ and Constraint~(\ref{constraint:1}) is violated for $t$ for some $t=t'-1,t' , \dots ,  l-2$. 
In this case,
\begin{align*} 
	d(p,x_{i+2^{t+2}}) & > d(x_{i+2^{t+2}}, x_{i+2^{t+1}}) + \lambda d(x_{i},x_{i+2^{t+1}}) \text{ and }
\\ d(p,x_{i}) & > d(x_{i}, x_{i+2^t}) + \lambda d(x_{i+2^t},x_{i+2^{t+1}}).
	\end{align*} Therefore,
	\begin{align*}
		2d(p,x_{i_{\max}}) & \geq d(p,x_{i+2^{t+2}}) -d(x_{i_{\max}},x_{i+2^{t+2}})+ d(p,x_{i}) -d(x_{i_{\max}},x_{i}) 
		\\ &> \lambda d(x_i,x_{i+2^{t}})+(2\lambda -1 )d(x_{i+2^{t}}, x_{i+2^{t+1}}) >(2\lambda -1)d(x_i, x_{i + 2^{t+1}})
		\\ & \geq (2\lambda -1 ) d(x_i, x_{i_{\max}}) \geq 2\frac{f}{1-f}g d(x_i, x_{{\max}}) 
	\end{align*}
	where $\frac{f}{1-f} > 1$. We keep the constant $\frac{f}{1-f}$ to simplify the proof for case~\ref{caseb}.
\end{enumerate}
	\item \label{caseb}$ i+{2^{l-1}} < i_{\max}\leq j-1 $. \\
	In this case, there exists a positive integer $t' \leq l-1$ such that $j-2^{t'}<i_{\max}\leq j-2^{t'-1}$.
According to the definition of $M^{LEV}_i$, $p$ is not in $\tau_{j-2^{t}}$ for any $t=t'-1,t',\dots, l-1$. 
Note that we did not use $p\in \tau_i$ in the proof of case~\ref{casea} where $i_{\max} \leq i+2^{l-1}$. Therefore, the cases~\ref{casea} and~\ref{caseb} are symmetric to each other (with $i$ and $j$ swapped). In the case symmetric to case \ref{caseai}, we have 
$$d(p,x_{i_{\max}}) \geq \lambda d(x_{i_{\max}}, x_i) \geq g d(x_i, x_{i_{\max}}).$$
For the cases symmetric to cases~\ref{caseaii} and~\ref{caseaiii}, we have
$$d(p,x_{i_{\max}}) \geq \frac{f}{1-f}g d(x_j,x_{i_{\max}}) \geq g d(x_i, x_{i_{\max}}).$$
\end{enumerate}
Since we have covered all possible cases, the proof is completed.
\qedhere
\end{proof}
\subsection{Proof of Lemma~\ref{lemma:type1-to-non-oscillating}}\label{proof:lemma:type1-to-non-oscillating}
\begin{proof}
We define the route of each server $i$ in $NOS$ separately, and prove that for each server $i$, the non-isolating route that we defined is at most four times of the route defined in the given algorithm $RES$. 

To describe the non-oscillating route for server $i$, we first denote the sequence of the regions visited by server $i$ to be $(\tau_{i_1}, \tau_{i_2}, \dots , \tau_{i_q})$ where $q$ is the length of the sequence. According to this definition, $i_1=i_q=i$, and $i_{j}\neq i_{j+1}$ for any $j=1,2,\dots , q-1$.  

For each $j=1,2,\dots , q$, denote $a_j$ and $b_j$ the first and last points (each point could be a request or the depot $x_i$) corresponding to the region $\tau_{i_j}$ visited by server $i$ under the algorithm $RES$. 
Note that according to the definition, $a_1= b_q=x_i$. 

Now we consider the case where the route is oscillating between the $t^{th}$ region and the $t'^{th}$ region and $t'-t\geq 3$, i.e., $i_j=i_{j+2}$ for $j=t, \dots , t'-1$, but $i_j\neq i_{j+2}$ for $j=t-1,t'$. For the case where $t'-t$ is odd, we will define an alternative routes that starts at $a_t$, ends at $b_{t'}$, and does not oscillate at all. For the case where $t'-t$ is even, we adopt the alternative route for the sequence $(\tau_{i_t}, \tau_{i_{t+1}}, \dots , \tau_{i_{t'-1}})$, and then the new route oscillates only in three regions starting at point $a_t$ and ends at point $b_{t'}$.

Let us now define an alternative route for the case where $t'-t$ is odd. To simplify the notation, we denote $r \rightarrow r'$ to be the route described in $RES$ that travels between the two requests (or between a request and a depot) $r$ and $r'$, and $r \dashrightarrow r'$ to be the shortest path to travel from location $r$ to location $r'$.
		In $NOS$, the alternative path of server $i$ follows the following three steps.
		\begin{enumerate}
			\item Travel through the requests in $\tau_{i_t}, \tau_{i_{t+2}}, \dots , \tau_{i_{t'-1}}$ through the order that is the same as $RES$, i.e.,
		$$a_t\rightarrow b_t \dashrightarrow a_{t+2} \rightarrow b_{t+2}\dashrightarrow a_{t+4} \rightarrow b_{t+4} \dots \dashrightarrow a_{t'-1} \rightarrow b_{t'-1}.$$
		\item Go to location $a_{t+1}$, i.e.,
		$$ b_{t'-1}\dashrightarrow a_{t+1}. $$ 
		\item Travel through the requests in $\tau_{i_{t+1}}, \tau_{i_{t+3}}, \dots , \tau_{i_{t'}}$ through the order that is the same as $RES$, i.e.,
		$$a_{t+1}\rightarrow b_{t+1} \dashrightarrow a_{t+3} \rightarrow b_{t+3}\dashrightarrow a_{t+5} \rightarrow b_{t+5} \dots \dashrightarrow a_{t'} \rightarrow b_{t'}.$$
		\end{enumerate}
It is clear that the alternative route does not oscillate. Note that the solid arrows are also in the original route of $RES$ and they do not duplicate. Therefore, the solid arrows do not increase the length of the route. 
The length of the dashed arrows in each of the three steps is smaller than the length of the route that travels from $b_t$ to $a_{t'}$ in the route of $RES$.
 In addition, the route between $b_t$ and $a_{t'}$ does not intersect with different parts of the route that oscillate, even for other servers.  
Therefore, the total additional length due to all dashed arrows for all servers is at most $3RES(I)$. 
As a result, we conclude that $NOS(I) \leq 4RES(I).$
	\qedhere
\end{proof}
\subsection{Proof of Lemma~\ref{lemma:non-oscillating-to-alg}}\label{proof:lemma:non-oscillating-to-alg}
\begin{proof}
Because $$\sum_{i=1}^m DIS^{LEV}(S_i) \geq DIS^{LEV}(\bigcup_{i=1}^m  S_i) $$ for any sets of requests $S_1,\dots, S_m$, it is sufficient to prove the following inequality.
$$ DIS^{LEV}(S^{NOS}_i) \leq \frac{25-5\lambda-6\lambda^2}{1-\lambda} (\text{length of the route of server }i \text{ in }NOS)$$ 
for each $i=1,2,\dots , m$.

Given the route of server $i$ in $NOS$, we will create an algorithm $LEV'$ such that any request in $S^{NOS}_i \cap M^{LEV}_j$ is assigned to server $j$ where $S^{NOS}_i$, as defined in Section~\ref{sec:offline_problem}, is the set of the requests that are assigned to server $i$ under the algorithm $NOS$. 

After describing the algorithm $LEV'$, we will prove the following claim. 
\begin{align} LEV'(S^{NOS}_i) \leq \frac{25-5\lambda-6\lambda^2}{1-\lambda} (\text{length of the route of server }i \text{ in }NOS). \label{inequality:lev'}\end{align}
If Claim~(\ref{inequality:lev'}) holds, then the lemma holds because each server in $DIS^{LEV}(S^{NOS}_i)$ travels through the optimal TSP tour, and thus the cost $DIS^{LEV}(S^{NOS}_i)$ is not greater than $LEV'(S^{NOS}_i)$. 

Let us now describe how we create the algorithm $LEV'$.
Given the route of server $i$ in $NOS$, we define $q$ and $\{ a_j, b_j, i_j \}_{j=1}^q$ in the same way as we defined in the proof of Lemma~\ref{lemma:type1-to-non-oscillating}. 
 If $q=1$, set $LEV'=NOS$, then we are done. Therefore, we assume $q>1$.
  We first note that when $q>1$, $q \geq 3$ because $i_1=i_q=i$ but $i_1\neq i_2$. 
  For $j=1, q$, we add segment $(b_1,x_i)$ and $(x_i, a_q)$ respectively. 
  For each $j=2,3,\dots , q-1$, we add two segments $(x_{i_j},a_j)$ and $(b_j, x_{i_j})$. 
  By doing so, each server $j\in \mathbb{N}_l$ can follow a route that begins and ends at the depot $x_{j}$ and visit all requests in $I\cap M^{LEV}_{j}$. Hence, we have successfully defined a valid algorithm $LEV'$.
  
We are now ready to prove Claim~(\ref{inequality:lev'}).
 The total length that we added is $$ d(b_1,x_i)+d(x_i,a_q) + \sum_{j=2}^{q-1}d(x_{i_j},a_j)+d(b_j,x_{i_j}).$$ 
 It is sufficient to prove that this quantity is at most $\frac{24-6\lambda-6\lambda^2}{1-\lambda}$ times of the length of the route of server $i$ in $NOS$.
To prove this, we consider each $d(x_{i_j},a_j)$ and $d(x_{i_j},b_j)$ separately. For each $j=3,4,\dots , q-2$, we prove the following claim.
	\begin{align}
		d(x_{i_j},a_j)\leq \frac{4-\lambda-\lambda^2}{1-\lambda} (d(a_{j-1}, a_j)+d(a_j,a_{j+1})+d(a_{j+1},a_{j+2})). \label{inequality:each_a_j}
	\end{align} We skip the proof of the cases where $j=1,2,q-1,q$ but similar results hold for those cases. By symmetry, similar results hold when each of the $a$ appeared in Claim~(\ref{inequality:each_a_j}) is replaced with $b$. 
If we have Claim~(\ref{inequality:each_a_j}), we can prove the lemma by summing $d(x_{i_j},a_j)+d(x_{i_j},b_j)$ over all $j$.

	Before proving Claim~(\ref{inequality:each_a_j}), we first prove the following two useful inequalities. Given three integers $t_1, t_2, t_3$ in $\mathbb{N}_l$ such that $t_1<t_2<t_3$ and three points $p_1\in \tau_{t_1}$, $p_2\in \tau_{t_2}$, and $p_3\in \tau_{t_3}$, we have
	\begin{align}
		d(x_{t_2}, p_2) &\leq \frac{2+\lambda}{2}d(p_1,p_3) \label{inequality:middle_point}\\ 
		d(x_{t_3}, p_3) &\leq \frac{4-\lambda-\lambda^2}{2-2\lambda} (d(p_1,p_3)+d(p_2,p_3))\label{inequality:right_point} 
	\end{align} 	 	 
Let us first prove Inequality~(\ref{inequality:middle_point}). To prove this, we first find the following upper bound for $d(x_{t_2}, p_2)$.
\begin{align*}
	2 d(x_{t_2}, p_2) & \leq d(x_{t_2+2^{l}}, p_2)+d(x_{t_2+2^{l}}, x_{t_2}) +d(x_{t_2-2^{l}}, p_2)+d(x_{t_2-2^{l}}, x_{t_2})
	\\ & \leq (2+\lambda) d(x_{t_2+2^{l}},x_{t_2-2^{l}}).
		\end{align*}
	Then, we find the following lower bound for $d(p_1,p_3)$.
	\begin{align}
		d(p_1,p_3) & \geq d(x_{t_1-2^{l}}, x_{t_3+2^{l}}) - d(x_{t_1-2^{l}}, p_1) -d(x_{t_3+2^{l}}, p_3) \nonumber
		\\ & > d(x_{t_1+2^{l}}, x_{t_3-2^{l}}) + (1-\lambda)d(x_{t_1+2^{l}},x_{t_1})+(1-\lambda)d(x_{t_3-2^{l}},x_{t_3}) \nonumber
		\\ & > d(x_{t_2+2^{l}},x_{t_2-2^{l}}) \label{inequality:distance_p_1_p_3}.
		\end{align}
		Combine the two inequalities above, we obtain Inequality~(\ref{inequality:middle_point}).
		
		Inequality~(\ref{inequality:right_point}) is a direct result of the following inequality.
		\begin{align*}
			d(x_{t_3},p_3)&\leq d(x_{t_3},x_{t_2})+d(x_{t_2},p_2)+d(p_2,p_3)
			\\ & \leq \frac{1}{1-\lambda}d(p_1,p_3)+\frac{2+\lambda}{2}d(p_1,p_3)+d(p_2,p_3)
		\end{align*}
where the last inequality follows from Inequalities~(\ref{inequality:middle_point}) and~(\ref{inequality:distance_p_1_p_3}).

We are now ready to prove Claim~(\ref{inequality:each_a_j}). Without loss of generality, we assume that $i_{j-1}<{i_j}$. We distinguish three cases.
\begin{enumerate}
\item $i_{j+1}<i_{j}$ and $i_{j-1}\neq i_{j+1}$. In this case, according to Inequality~(\ref{inequality:right_point}), 
$$ d(x_{i_j},a_j)\leq \frac{4-\lambda-\lambda^2}{2-2\lambda}  (d(a_{j-1},a_j)+d(a_j,a_{j+1})).$$
\item $i_{j}<i_{j+1}$. In this case, according to Inequality~(\ref{inequality:middle_point}), 
$$d(x_{i_j},a_j)\leq \frac{2+\lambda}{2}d(a_{j-1},a_{j+1})\leq \frac{2+\lambda}{2}(d(a_{j-1},a_j)+d(a_j,a_{j+1}) ).$$
\item $i_{j-1}=i_{j+1}$. In this case, we consider $i_{j+2}$. Because the route is non-oscillating, $i_{j+2}\neq {i_j}$. We further distinguish two cases.
\begin{enumerate}
	\item $i_{j+2} < {i_j}$. In this case, according to Inequality~(\ref{inequality:right_point}), 
\begin{align*} d(x_{i_j},a_j) & \leq \frac{4-\lambda-\lambda^2}{2-2\lambda}  (d(a_{j},a_{j+2})+d(a_j,a_{j+1})) 
	\\ & \leq\frac{4-\lambda-\lambda^2}{1-\lambda}  (d(a_{j+1},a_{j+2})+d(a_j,a_{j+1})).	
\end{align*}
	\item $i_{j+2} > {i_j}$. In this case, according to Inequality~(\ref{inequality:middle_point}), 
$$d(x_{i_j},a_j) \leq \frac{2+\lambda}{2}d(a_{j+1},a_{j+2}).$$
\end{enumerate}
	\end{enumerate} 
	Hence the proof is completed.
	\qedhere
\end{proof}
\subsection{Proof of Theorem~\ref{thm:local_partition}}\label{proof:thm:local_partition}
\begin{proof}
	Without loss of generality, let us assume that the minimum distance between a pair of depots is one, i.e., $\min_{i,j}\{d(x_i,x_j)\}=1$. 

Let us first prove that the approximation ratio given by the partition scheme $LOC$ is $\Omega(f(m))$. Consider the following example. Let the metric space be $\mathbb{R}$, $m=3$ and $x_1=0$, $x_2=1$, $x_3=f(m)+1$. Let there be only one request and the request is located at $1+1/4$. In this case, the request will be assigned to server $3$, and the cost of the distributed algorithm is $DIS^{LOC}(I)=2(f(m)-1/4)$. On the other hand, the optimal algorithm can assign the request to server $2$ and the cost would be $OPT(I)=1/2$. As a result, the ratio $DIS^{LOC}(I)/OPT(I)=4f(m)-1$, which is $\Theta(f(m))$. Therefore, the partition scheme $LOC$ has an approximation ratio of $\Omega(f(m))$. 

Let us prove that the partition scheme $LOC$ leads to an approximation ratio of $O(f(m))$ by showing that it is at most $2+4f(m)$. 

We first divide the requests into two sets $L_1=I\cap \bigcup_{i=1}^{m-1}M_i$, and $L_2=I\cap M_m$. 
We have $DIS^{LOC}(I) = DIS^{LOC}(L_1) + DIS^{LOC}(L_2)$, $OPT(I) \geq OPT(L_1)$, and $OPT(I) \geq OPT(L_2)$. 
Therefore, the following two claims are sufficient to proving the theorem.
\begin{align}
	DIS^{LOC}(L_1)=OPT(L_1) \label{equation:L_1}
\end{align}
and 
\begin{align}
	DIS^{LOC}(L_2)\leq (1+4f(m)) OPT(L_2). \label{inequality:L_2}
\end{align}
We first prove Claim~(\ref{equation:L_1}). We prove this by showing that, when $I=L_1$, in the optimal algorithm $OPT$, $S^{OPT}_i$ consists only requests in $M^{LOC}_i$ for any $i=1,2,\dots m-1$. Using the same argument for proving this, it will be evident that $S^{OPT}_m$ is empty. To prove this, we first note that the distance between two points in different regions is greater than $1/2$. It is because if $p_1 \in M^{LOC}_i$, $p_2 \in M^{LOC}_j$, and $i\neq j$, then $$ d(p_1,p_2) \geq d(x_i,x_j)-d(x_i,p_1)-d(x_j,p_2) > 1-1/4-1/4 =1/2.$$
Now let us assume on the contrary that server $i$ serves at least one request not in $M^{LOC}_i$ in the optimal algorithm. Assume that after leaving $M^{LOC}_i$, server $i$ visits requests in regions in the order of $M^{LOC}_{i_1}$, $M^{LOC}_{i_2}$, $\dots$, and $M^{LOC}_{i_l}$ and then back to $M^{LOC}_i$. 
For each $l'=1,2,\dots , l$, denote $a_{i_{l'}}$ and $b_{i_{l'}}$ the first and last requests visited in the region $M^{LOC}_{i_{l'}}$ ($a_{i_{l'}}=b_{i_{l'}}$ if there is only one such request). 
To simplify the notation, we denote $b_{i_0}$ the last point (request or depot) server $i$ visits before leaving $M^{LOC}_i$ and $a_{i_{l+1}}$ the first point (request or depot) server $i$ visits after returning $M^{LOC}_i$. 

Now we consider alternative routes of servers by removing the $(l+1)$ edges $\{(b_{i_{l'}}, a_{i_{l'+1}})\}_{l'=0}^l$ and adding the $(2l+2)$ edges $(x_i,b_{i_0})$, $(x_i, a_{i_{l+1}})$, and $\{(x_{i_{l'}},a_{i_{l'}}), (x_{i_{l'}},b_{i_{l'}})\}_{l'=1}^l$.

The length of each removed edge is greater than $1/2$, and the length of each added edge is at most $1/4$. Therefore, the length of the new solution is smaller than that of the optimal solution, which is a contradiction. Hence, Claim~\ref{equation:L_1} is true.

Let us now prove Claim~(\ref{inequality:L_2}). The inequality obviously holds if $S^{OPT}_i$ is empty for all $i=1,2,\dots , m-1$. If $S^{OPT}_i$ is not empty for any $i=1,2,\dots , m-1$, let $a_i$ and $b_i$ be the first and last such request in the route of server $i$ in the optimal solution ($a_i=b_i$ if there is only one such request). For each $i=1,2,\dots m-1$, we replace $(a_i,x_i)$ and $(b_i,x_i)$ with $(a_i, x_m)$ and $(b_i, x_m)$ such that all requests in $S^{OPT}_i$ are served by server $m$ in the new solution. Clearly, the length of the new solution is an upper bound of $TSP_m(L_2)$, which is the same as $DIS^{LOC}(L_2)$.

We note that for any $p\in L_2$, $\frac{d (p,x_m)}{d(p,x_i)} <1+4f(m)$ because
$$ \frac{d (p,x_m)}{d(p,x_i)} \leq \frac{d(p,x_i) + d(x_i,x_m)}{d(p,x_i)} < 1+4f(m).$$
Therefore, the length of the new solution is at most $(1+4f(m))OPT(L_2)$. Thus, Claim~(\ref{inequality:L_2}) holds.

\end{proof}


\end{document}